\newtheorem{thm}{Theorem}
\newtheorem{lem}{Lemma}
\def\llto{\mathbin{-\mkern-3mu\circ}}              
\def\lltensor{\otimes}
\newcommand{\Let}[3]{\mbox{\sf let\ } #1 \mbox{\ \sf be\ } #2 
                 \mbox{\ \sf in\ } #3}
\newcommand{\KB}{$\mathcal{K}_{\mathcal{B}}$}
\newcommand{\TM}{$\mathcal{T}(M)$}
\newcommand{\AAA}{$\mathcal{A}$}
\newcommand {\fli} {\Rightarrow}
\newcommand{\ldlalb}{\ma{L}_{DLAL_{B}}}   
\def\ci#1{\underline{#1}}                 
\def\FV#1{FV({#1})}           
\def\no#1#2{no({#1, #2})}     
\newcommand{\ma}{\mathcal}
\def\sub#1#2 {[ #1 \slash #2 ]}                                            
\newcommand {\fm} {\multimap}
\newcommand{\te} {\otimes}
\newcommand{\pa} {\mathord{\S}}
\newcommand{\Ga} {\Gamma}
\newcommand{\De} {\Delta}
\newcommand{\al} {\alpha}
\newcommand{\la} {\lambda}
\newcommand{\de} {\delta}
\newcommand{\be} {\beta}
\newtheorem{definition1}{Definition}
\newcommand{\DLAL}{\mbox{\bf $DLAL$}}
\newcommand{\DLALB}{\mbox{\bf $DLAL_B$}}
\newcommand{\STA}{\mbox{\bf $STA$}}
\newcommand{\STAB}{\mbox{\bf $STA_B$}}
\newcommand{\LLLL}{\mbox{\bf $LLL$}}
\newcommand{\BOO}{\mathbf{Bool}}
\newcommand{\WWW}{\mathbf{W}}
\newcommand{\NNN}{\mathbf{N}}
\newcommand{\LaB}{\Lambda_{B}}
\begin{document}

\title{A type system for PSPACE derived from light linear logic}

\author{Lucien Capdevielle
\thanks{Partially supported by the project ANR-08-BLANC-0211-01 "COMPLICE".}
 \institute{ENS de Lyon}
 }
\def\titlerunning{A type system for PSPACE}
\def\authorrunning{Lucien Capdevielle}
\providecommand{\event}{DICE 2011}
\maketitle


\begin{abstract}
 We present a polymorphic type system for lambda calculus ensuring that well-typed
programs can be executed in polynomial space: \textit{dual light affine logic with booleans} (\DLALB).
 To build \DLALB\ we start from \DLAL\ (which has a simple type language with a linear and an intuitionistic type arrow,
as well as one modality) which characterizes FPTIME functions. In order to extend its expressiveness we add two boolean constants and a conditional constructor in the same way as with the system \STAB\ in \cite{Gabo}.

We show that the value of a well-typed term can be computed by an alternating machine in polynomial time, thus such a term
represents a program of PSPACE (given that PSPACE = APTIME (\cite{Chandra})).

 We also prove that all polynomial space decision functions can be represented in \DLALB.
 
 Therefore \DLALB\ characterizes PSPACE predicates.
 \end{abstract}

\section{Introduction}

The topic of this paper is Implicit Computational Complexity which is the field of study of calculi and languages with intrisic complexity-theoretical properties. One of the main issues of this field is to design programing languages with bounded computational complexity. Historically, there have been various approaches:
\begin{itemize}
\item restriction of recursive schemes (\cite{Bellancoo}, \cite{Leivantmar})
\item interpretation methods for first order interpretational languages (\cite{Marionmoy}, \cite{Hofma})
\item variations of linear logic and proofs-as-programs Curry-Howard correspondence (\cite{Girard95a}, \cite{AspertiRoversi02} and \cite{Lafont02})
\end{itemize}

 The latest approach has led to the design of type systems for $\la$-calculus such that the set of all well-typed terms corresponds to the class PTIME.
 
 In this paper, we will present a type assignment system which guarantees that a program of the language is PSPACE and that all predicates of PSPACE can be encoded in this language.

 Coming back to the approach of linear logic, it is based on the observation that the duplication rule is controlled by the logical connective "!". Moreover, the power of duplication is responsible for the complexity of normalization. Thus, by replacing the "!" with a weaker connective, one obtains systems with controlled duplications, and where normalization offers a complexity bound. Light Linear Logic (LLL, \cite{Girard95a}) and Soft Linear Logic (SLL, \cite{Lafont02}) are two examples of such systems.

	First, the system \DLAL\ (\cite{Bai} and \cite{BaillotTerui04}) has been derived from LLL and then the system \STA\ (\cite{SimonaGabo07}) from SLL. These systems are both characterizing PTIME. Then, in order to characterize PSPACE predicates, Gaboardi and al. have designed the system \STAB\ (\cite{Gabo}) by adding two boolean constants and a conditional constructor to the system \STA. The goal of this paper is to see if it is possible to adapt this method in order to obtain a system characterizing PSPACE by modifying the system \DLAL.

	It is straightforward to define \DLALB\ starting from \DLAL\ in an analogous way of \STAB\ is defined from \STA. However, proving that the complexity bound of this system is polynomial is not obvious. In fact, one difficulty is that the complexity bound of LLL and \DLAL\ is proved by using a specific reduction strategy (level-by-level strategy) which is not compatible with the conditional we add to the language. Thus we will introduce an abstract alternating machine and a measure on the terms in order to prove the PSPACE bound. Thus we use the fact that PSPACE = APTIME (\cite{Chandra}) both in the completeness and the soundness parts of the proof (contrary to the proof that \STAB\ characterizes the predicates of PSPACE where PSPACE = APTIME is only used for the completeness).

	The paper is organized as follows. We first give the definition of the system \DLALB\ and some properties of this system in Section 2. Then in Section 3 we give the proof that any well-typed term represents a predicate of APTIME. Finally, in Section 4 we prove that any predicate of APTIME is represented by a well-typed term.



\section{$\la$-calculus with booleans and type assignment}

In this section, we will first define \DLALB, then we will give some classical properties which are true for terms well-typed in \DLALB.

\subsection{Definition of $\LaB$ and \DLALB}

 We start from the $\la$-calculus of \DLAL\ and will extend it with booleans and a conditional constructor in order to obtain \DLALB\ (analogous to \cite{Gabo}).\\

 The language $\ldlalb$ of \DLALB\ types is given by:
$$A, B::= \alpha \; | \; A \fm B \; | \; A \fli B \; |\; \pa A   \;|\; \forall \al . A \;|\; Bool.$$

\DLALB\ can be seen as a refinement of System F ensuring some complexity properties.

 The language $\LaB$ of $\la$-terms with booleans is given by:
$$t, u, v::= x \; | \; F \; | \; T \; | \; \la x. t \; | \; t\ u \; |\; if\ t\ then\ u\ else\ v.$$

 The terms of $\LaB$ admit another type of reduction than the $\beta$-reduction, the $\de$-reduction which is the contextual closure of:
\begin{center}
$(if\ T\ then\ u\ else\ v) \xrightarrow{\de} u$\\
and\\
$(if\ F\ then\ u\ else\ v) \xrightarrow{\de} v.$
\end{center}

\begin{definition1}\label{canoncomp}
A term $t$ of $\LaB$ can be written in a unique way as $M=N_0\ N_1\ ...\ N_m$ with $m \in \mathbb{N}$ and ($N_0=x$ or $N_0=\la x. t$ or $N_0=if\ M_0\ then\ M_1\ else\ M_2$).\\
The terms $N_i$ are called elements of the canonical composition.
\end{definition1}

In order to prove the complexity bound, we have to adapt the classical notion of number of occurences in such a way that it is compatible with the additive rule ($B$ e) of \DLALB\ (defined in Figure \ref{DLALBrules}).

\begin{definition1}\label{numboccterm}
The number of occurences of a variable in a term is inductively defined on the structure of the terms as follows: $\no{x}{x}=1$, $\no{x}{y}=0$, $\no{x}{F}=0$, $\no{x}{T}=0$, $\no{x}{\la y. t}=\no{x}{t}$, $\no{x}{\la x. t}=\no{x}{t}$, $\no{x}{t\ u}=\no{x}{t}+\no{x}{u}$, $\no{x}{if\ t_0\ then\ t_1\ else\ t_2}=\displaystyle \max_{i} \no{x}{t_i}$.
\end{definition1}

\textbf{Examples}: $\no{x}{(if\ x\ then\ x\ else\ x\ y)\ y}=1$\\
          $\no{y}{(if\ x\ then\ x\ else\ x\ y)\ y}=2$.

For \DLALB\ typing we will handle judgements of the form $\Gamma ; \Delta \vdash t:A$ (and $\Gamma \vdash_{F} t:A$ for System F).
The intended meaning is that variables in $\Delta$ are (affine)
linear, that is to say that they have at most one occurrence in the
term, while variables in $\Gamma$ are non-linear.
 We give the typing rules as a natural deduction system:
see Figure \ref{DLALBrules} (the rules of \DLALB\ are those of \DLAL\ plus ($B_0$ i), ($B_1$ i) and ($B$ e)). 
 
 We have:
\begin{itemize}
\item for ($\forall$ i): \quad (*) $\alpha$ does not
  appear free in $\Gamma, \Delta$.
\item in the ($\fli$ e) rule the r.h.s.\ premise can also be of the
  form $; \vdash u:A$ ($u$ has no free variable).
\end{itemize}

 \begin{figure*}

  \begin{center}
\fbox{
\begin{tabular}{c@{}cc}
  
  & \multicolumn{2}{c}{ {\infer[\mbox{(Id)}]{; x:A \vdash x:A}{}} }\\
  &&\\

 &{\infer[\mbox{($\fm$ i)}]{\Gamma; \Delta \vdash \la x. t: A \fm B }
 {\Gamma; \Delta, x:A \vdash t:B}}
  & {\infer[\mbox{($\fm$ e)}]{\Gamma_1,\Gamma_2; \Delta_1, \Delta_2 \vdash t\ u :B }
  {\Gamma_1; \Delta_1 \vdash t:A \fm B & \Gamma_2; \Delta_2 \vdash u:A}}\\[1ex]

&{\infer[\mbox{($\fli$ i)}]{\Gamma; \Delta \vdash \la x. t: A \fli B }
 {\Gamma, x:A ; \Delta\vdash t:B}}
  & {\infer[\mbox{($\fli$ e)}]{\Gamma, z:C ; \Delta \vdash t\ u :B }
  {\Gamma; \Delta \vdash t:A \fli B & ; z:C \vdash u:A}}\\[1ex]

&{\infer[\mbox{(Weak)}]{\Gamma_1, \Gamma_2; \Delta_1, \Delta_2 \vdash t: A }
 {\Gamma_1; \Delta_1 \vdash t:A}}
  &{\infer[\mbox{(Cntr)}]{x:A, \Gamma; \Delta \vdash t[x \slash x_1, x \slash x_2] :B }{x_1:A,x_2:A, \Gamma; \Delta \vdash t:B }} \\[1ex]

&{\infer[\mbox{($\pa$ i)}]{\Gamma;\pa\Delta\vdash t: \pa A }
 { ;\Gamma, \Delta \vdash t:A}}
  & {\infer[\mbox{($\pa$ e)}]{\Gamma_1,\Gamma_2; \Delta_1, \Delta_2 \vdash t[u \slash x] :B }
  {\Gamma_1; \Delta_1 \vdash u: \pa A  & \Gamma_2; x:\pa A, \Delta_2 \vdash t:B}}\\[1ex]

&{\infer[\mbox{($\forall$ i) (*)}]{\Gamma; \Delta \vdash  t:\forall \alpha. A}{\Gamma; \Delta \vdash t:A}} 
  & {\infer[\mbox{($\forall$ e)}]{\Gamma; \Delta \vdash t:A[B \slash \al] }
    {\Gamma; \Delta \vdash t:\forall \al. A}}\\
    &&\\

 & {\infer[\mbox{($B_0$ i)}]{ ; \vdash F:Bool }{}}
  &{\infer[\mbox{($B_1$ i)}]{ ; \vdash T:Bool }{}}\\[1ex]

 & \multicolumn{2}{c}{ {\infer[\mbox{($B$ e)}]{\Gamma; \Delta \vdash if\ M_0\ then\ M_1\ else\ M_2 :A }
    {\Gamma; \Delta \vdash M_0:\pa^n Bool & \Gamma; \Delta \vdash M_1:A & \Gamma; \Delta \vdash M_2:A & n\in \mathbb{N}}} }\\

 \end{tabular}
}
\end{center}
  \caption{Natural deduction system for \DLALB}\label{DLALBrules}
\end{figure*}

\begin{definition1}\label{depderivation}
The depth of a $\DLALB$ derivation $\mathcal{D}$ is the maximal number of premises of $(\pa i)$ and r.h.s. premises of $(\fli e)$ in a branch of $\mathcal{D}$.
\end{definition1}

\begin{definition1}\label{parderivation}
The l.h.s. premises of $(\fm e)$, $(\fli e)$ and $(\pa e)$ as well as the unique premise of $(\forall e)$ are called major premises. A \DLALB derivation is $\forall \pa$-normal if:
\begin{itemize}
\item no conclusion of a $(\forall i)$ rule is the premise of a $(\forall e)$ rule;
\item no conclusion of a $(\pa i)$ rule is the major premise of a $(\pa e)$ rule;
\item no conclusion of $(Weak)$, $(Cntr)$ and $(\pa e)$ is the major premise of elimination rules: $(\fm e)$, $(\fli e)$, $(\pa e)$ and $(\forall e)$.
\end{itemize}
\end{definition1}

\begin{definition1}\label{deltabarreduc}

Let $\bar{\de}$-reduction be the reduction defined by:\\
Let $t_0$ be a closed term.\\
Let $C$ be a context.\\
$C[if\ t_0\ then\ t_1\ else\ t_2] \xrightarrow{\bar{\de}} t_0$\\
$C[if\ t_0\ then\ t_1\ else\ t_2] \xrightarrow{\bar{\de}} C[t_1]$\\
$C[if\ t_0\ then\ t_1\ else\ t_2] \xrightarrow{\bar{\de}} C[t_2]$

\end{definition1}

\textbf{Examples}:\\
$(\la x.(if\ (\la z.z\ F)\ then\ (x\ u)\ else\ y)\ v) \xrightarrow{\bar{\de}} (\la z.z\ F)$\\
$(\la x.(if\ (\la z.z\ F)\ then\ (x\ u)\ else\ y)\ v) \xrightarrow{\bar{\de}} ((\la x.(x\ u))\ v)$\\
$(\la x.(if\ (\la z.z\ F)\ then\ (x\ u)\ else\ y)\ v) \xrightarrow{\bar{\de}} ((\la x.y)\ v)$.


\subsection{Properties of $\DLALB$}

The contraction rule (Cntr) is used only on variables
on the l.h.s.\ of the semi-colon.  It is then straightforward to check
the following statements:
\begin{lem}\label{freevariablelemma}(Free Variable Lemma)
\begin{itemize}
\item If $\Ga;\De \vdash t:A$ then $\FV{t}\subset dom(\Ga)\cup dom(\De)$
\item If $\Ga;\De \vdash t:A$,  $\De'\subset \De$, $\Ga'\subset \Ga$ and $\FV{t}\subset dom(\Ga')\cup dom(\De')$ then $\Ga';\De' \vdash t:A$
\item If $\Ga;\De \vdash t:A$ and $x \in \Delta$ then we have $\no{x}{t} \leqslant 1$
\end{itemize}
\end{lem}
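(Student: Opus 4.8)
The plan is to prove all three statements by induction on the structure of the typing derivation $\mathcal{D}$ of $\Ga;\De \vdash t:A$, inspecting each rule of Figure \ref{DLALBrules} in turn.

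For the first item the base cases are (Id), where $t=x$ and $dom(\De)=\{x\}=\FV{x}$, and the boolean axioms $(B_0\ i)$, $(B_1\ i)$, where $\FV{F}=\FV{T}=\emptyset$. In the inductive step one checks that in every rule the free variables of the conclusion term are covered by the declared contexts: for $(\fm\ i)$ and $(\fli\ i)$ the bound variable is discharged from the context exactly as it is bound by $\la x$; for the elimination rules the union of the premise contexts covers the free variables of the subterms, taking into account that in $(\pa\ e)$ the conclusion term is a substitution, so $\FV{t[u/x]}\subset(\FV{t}\setminus\{x\})\cup\FV{u}$; for $(Weak)$ the context only grows; and for $(Cntr)$ the substitution $t[x/x_1,x/x_2]$ removes $x_1,x_2$ from and adds $x$ to the free variables, matching the change in the context. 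The rule $(B\ e)$ is immediate since $\FV{if\ M_0\ then\ M_1\ else\ M_2}=\FV{M_0}\cup\FV{M_1}\cup\FV{M_2}$ and all three premises carry the same context $\Ga;\De$.

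For the second item I would again induct on $\mathcal{D}$, at each step restricting the contexts of the premises to the sub-parts that still contain the free variables of the corresponding subterm (which is legitimate by the first item applied to the premises), applying the induction hypothesis, and re-assembling the rule. For the rules that split the context I must distribute $\Ga'$ and $\De'$ consistently with the way $\mathcal{D}$ split $\Ga$ and $\De$; the first item guarantees that each premise's free variables land in the correct piece. Since $\Ga'$ and $\De'$ are \emph{sub}-contexts, a final application of $(Weak)$ lets me reintroduce any variable of $\Ga';\De'$ that is not actually free in $t$, so the reconstructed derivation has exactly the prescribed context.

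For the third item the induction tracks a fixed linear variable $x\in\De$. The base case (Id) gives $\no{x}{x}=1$. Since the contraction rule $(Cntr)$ acts only on the intuitionistic zone $\Ga$, a linear variable is never duplicated. In the multiplicative elimination rules the linear context is partitioned between the two premises, so $x$ occurs free in at most one subterm and in $\no{x}{t\ u}=\no{x}{t}+\no{x}{u}$ one summand is $0$; with the induction hypothesis this keeps the count at most $1$. The delicate case, and the point where the non-standard counting of Definition \ref{numboccterm} is exactly designed to fit, is the additive rule $(B\ e)$: here the \emph{same} context $\Ga;\De$ is shared by the three premises $M_0,M_1,M_2$, so $x$ may occur in each of them, but because $\no{x}{if\ M_0\ then\ M_1\ else\ M_2}=\max_i\no{x}{M_i}$ and each $\no{x}{M_i}\leqslant 1$ by the induction hypothesis, the maximum is still at most $1$. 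This interaction between the shared additive context and the $\max$-based count is the main obstacle, and it is precisely what motivates the definition.
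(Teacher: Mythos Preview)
Your proposal is correct and matches the paper's approach: the paper gives no detailed proof at all, merely remarking that since contraction is restricted to the non-linear zone ``it is then straightforward to check'' the three statements, which is exactly the induction on derivations you spell out. Your explicit treatment of the $(B\ e)$ case for the third item, showing how the $\max$ in Definition~\ref{numboccterm} absorbs the shared additive context, is precisely the point the paper leaves implicit.
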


We can make the following remarks on \DLALB\ rules:
\begin{itemize}
\item Initially the variables are linear (rule (Id)); to convert a
  linear variable into a non-linear one we can use the ($\pa$ i)
  rule. Note that it adds a $\pa$ to the type of the result and that the
 variables that remain linear get a $\pa$ type too.
\item the ($\fm$ i) (resp. ($\fli$ i)) rule corresponds to abstraction
  on a linear variable (resp. non-linear variable);
\item observe ($\fli$ e): a term of type $A\fli B$ can only be applied
to a term $u$ with at most one occurrence of free variable.
\end{itemize}

\begin{thm}\label{subjredthm}(Subject Reduction)\\
Let $\xrightarrow{\de \be}=(\xrightarrow{\be}\cup\xrightarrow{\de})$\\
If $\Ga;\De \vdash t:A$ is derivable and $t \xrightarrow{\de \be} v$, then $\Ga;\De \vdash v:A$.\\
If $\Ga;\De \vdash t:A$ is derivable and $t \xrightarrow{\bar{\de} \be} v$, then $\Ga;\De \vdash v:A$ or $\Ga;\De \vdash v:\pa^n \BOO$.
\end{thm}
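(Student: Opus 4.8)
The plan is to reduce the statement to a pair of substitution lemmas and then argue by induction on the typing derivation of $t$, performing the reduction either inside a premise (and reapplying the last rule) or at the root (by inversion). First I would isolate the two forms of substitution that the calculus allows. The \emph{linear substitution lemma} says that if $\Ga_1;\De_1,x{:}A \vdash t:B$ and $\Ga_2;\De_2 \vdash u:A$ with disjoint domains, then $\Ga_1,\Ga_2;\De_1,\De_2 \vdash t[u/x]:B$; since $x$ is linear it occurs at most once in $t$ (Lemma \ref{freevariablelemma}), so the induction on the derivation of $t$ is routine. The \emph{non-linear substitution lemma} says that if $\Ga,x{:}A;\De \vdash t:B$ and $;z{:}C \vdash u:A$ (or $;\vdash u:A$), then $\Ga,z{:}C;\De \vdash t[u/x]:B$ (resp.\ $\Ga;\De \vdash t[u/x]:B$); here $x$ may have been contracted, so the several copies of $u$ produced by the substitution all share the single free variable $z$, which the conclusion collects on the non-linear side. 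This is exactly the shape dictated by the $(\fli\ e)$ rule.

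With these in hand, the $\be$-case splits according to whether the redex $(\la x.t)\,u$ was typed with a linear or an intuitionistic arrow. If $\la x.t:A\fm B$ is applied by $(\fm\ e)$, inversion on $(\fm\ i)$ gives $\Ga;\De,x{:}A \vdash t:B$ and the linear substitution lemma yields the reduct. If $\la x.t:A\fli B$ is applied by $(\fli\ e)$, inversion on $(\fli\ i)$ gives $\Ga,x{:}A;\De \vdash t:B$, the right premise gives $;z{:}C \vdash u:A$, and the non-linear substitution lemma applies; the single-free-variable restriction on $(\fli\ e)$ is precisely what makes the contracted copies of $u$ typable. The $\de$-case is immediate: inverting $(B\ e)$ on $if\ T\ then\ M_1\ else\ M_2$ (resp.\ $F$) returns the premise $\Ga;\De \vdash M_1:A$ (resp.\ $M_2:A$), which is the reduct.

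For the $\bar{\de}$-reductions the two rewrites $C[if\ t_0\ then\ t_1\ else\ t_2]\to C[t_i]$ are handled by the same inversion, since both branches are typed with the same context and type $A$ as the whole conditional, together with a context-replacement step that reinserts $t_i$ into $C$; this step is itself what the induction on the derivation supplies when the redex lies inside a premise. The remaining rewrite $C[if\ t_0\ then\ t_1\ else\ t_2]\to t_0$ is the source of the second disjunct: inversion on $(B\ e)$ gives $\Ga;\De \vdash t_0:\pa^n\BOO$, and since $t_0$ is closed we in fact have $;\vdash t_0:\pa^n\BOO$, whence $\Ga;\De \vdash t_0:\pa^n\BOO$ by $(Weak)$. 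This is a term of boolean type rather than of type $A$, which is exactly why the conclusion must allow $v:\pa^n\BOO$.

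The main obstacle is that the logical rule introducing a component of a redex need not be the last rule above it: the structural rules $(Weak)$, $(Cntr)$, $(\pa\ i)$, $(\pa\ e)$ and the quantifier rules $(\forall\ i)$, $(\forall\ e)$ may sit between an introduction and its matching elimination. The inversion steps above must therefore be phrased as generation lemmas that commute these rules past the redex, and the interaction of $(Cntr)$ and $(\pa\ e)$ with the non-linear substitution—where duplication and the modal substitution of $(\pa\ e)$ both act on the same occurrences—is the delicate point to get right. Once these generation lemmas are established, a single induction on the derivation closes both the contextual-closure cases and the root cases uniformly.
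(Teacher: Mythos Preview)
Your proposal is correct and is essentially the standard subject-reduction argument for \DLAL\ extended to the boolean constructs; this is exactly what the paper intends, since its entire proof is the one line ``Almost the same as in \cite{Bai}.'' You have simply spelled out what that citation contains (the two substitution lemmas, generation/inversion modulo the structural and $\pa$/$\forall$ rules, and the induction on derivations) and added the easy $(B\ e)$-inversion cases for $\de$ and $\bar\de$, including the observation that the $C[\cdot]\to t_0$ clause forces the $\pa^n\BOO$ disjunct via closedness of $t_0$, the Free Variable Lemma, and $(Weak)$.
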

\begin{proof}\\
Almost the same as in \cite{Bai}.
\end{proof}

In order to prove the strong normalisation of the terms well-typed in \DLALB, we will prove that such terms can be translated into terms of System F (which has the property of strong normalisation).

\begin{definition1}\label{traductypsystemF}
The translation ()* of a $\DLALB$ type in a type of System F is inductively defined on the structure of the types as follows: $(\alpha)^*=\alpha$, $(A \fm B)^*=(A)^*\rightarrow (B)^*$, $(A \fli B)^*=(A)^*\rightarrow (B)^*$, $(\pa A)^*=(A)^*$, $(\forall \al . A)^*=\forall \al . (A)^*$, $(Bool)^*=\forall \al . \al\rightarrow \al\rightarrow \al$.
\end{definition1}

\begin{definition1}\label{traductermsystemF}
The translation ()* of a term of $\LaB$ in a term of $\Lambda$ is inductively defined on the structure of the terms as follows: $(x)^*=x$, $(F)^*=\la x. \la y. y$, $(T)^*=\la x. \la y. x$, $(\la x. t)^*=\la x. (t)^*$, $(t\ u)^*=(t)^*\ (u)^*$, $(if\ t\ then\ u\ else\ v)^*=(t)^*\ (u)^*\ (v)^*$.
\end{definition1}

\begin{lem}\label{typdlalbtypF}
  If $\Ga;\De \vdash t:A$ then $(\Ga)^*,(\De)^* \vdash_{F} (t)^*:(A)^*$.
\end{lem}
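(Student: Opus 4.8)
The plan is to prove this by structural induction on the derivation of $\Ga;\De \vdash t:A$ in \DLALB, showing that each typing rule is mapped by the translation $()^*$ to a valid derivation step (or a short sequence of steps) in System F. The key observation that makes this tractable is that the translation on types collapses the two arrows $\fm$ and $\fli$ to the single System F arrow $\rightarrow$, erases the modality $\pa$ entirely (so $(\pa A)^* = (A)^*$), and sends $Bool$ to its Church encoding $\forall \al . \al \rightarrow \al \rightarrow \al$. Consequently the finer linear/intuitionistic bookkeeping of \DLALB\ is forgotten, and we only need to check that the underlying intuitionistic skeleton survives.

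First I would set up the induction. The base cases are (Id), which becomes the System F axiom for $x:(A)^*$, and the boolean introductions $(B_0\,\mathrm{i})$ and $(B_1\,\mathrm{i})$: here I must check that $(F)^* = \la x.\la y.y$ and $(T)^* = \la x.\la y.x$ are indeed derivable of type $(Bool)^* = \forall \al.\al\rightarrow\al\rightarrow\al$ in System F, which is the standard typing of the Church booleans. For the inductive steps, the two introduction rules $(\fm\,\mathrm{i})$ and $(\fli\,\mathrm{i})$ both translate to ordinary $\rightarrow$-introduction (abstraction) in System F, since $(t)^* $ of an abstraction is $\la x.(t)^*$ and both \DLALB\ arrows map to $\rightarrow$; similarly $(\fm\,\mathrm{e})$ and $(\fli\,\mathrm{e})$ become $\rightarrow$-elimination (application). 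The quantifier rules $(\forall\,\mathrm{i})$ and $(\forall\,\mathrm{e})$ map directly to the System F quantifier rules, and the eigenvariable side condition $(*)$ transfers because $\al$ not appearing free in $\Ga,\De$ implies it does not appear free in $(\Ga)^*,(\De)^*$.

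The genuinely modal rules require a little more care but collapse favourably. For $(\pa\,\mathrm{i})$ and $(\pa\,\mathrm{e})$, since $(\pa A)^* = (A)^*$ and $(\pa\Delta)^* = (\Delta)^*$, the translated judgement is unchanged by the modality, so these rules become trivial in System F: $(\pa\,\mathrm{i})$ reduces to an identity on the translated judgement (modulo merging $\Ga$ and $\Delta$, which in System F are a single context), and $(\pa\,\mathrm{e})$, whose term action is a substitution $t[u/x]$, is handled by the System F substitution lemma applied to the two translated premises. The structural rules $(Weak)$ and $(Cntr)$ translate to System F weakening and contraction on the context; here I would invoke the admissibility of these structural rules in System F, noting that contraction is sound because $(t[x/x_1,x/x_2])^*$ is exactly $(t)^*$ with the same variable substitution. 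The crucial rule is the additive elimination $(B\,\mathrm{e})$: its three premises share the context $\Ga;\De$ and have conclusions $(\pa^n Bool)$, $A$, and $A$; under translation the first premise gives $(M_0)^* : (Bool)^* = \forall\al.\al\rightarrow\al\rightarrow\al$ (the modalities $\pa^n$ vanish), and the other two give $(M_1)^*,(M_2)^*:(A)^*$, all in the same context $(\Ga)^*,(\De)^*$. Then $(if\,M_0\,then\,M_1\,else\,M_2)^* = (M_0)^*\,(M_1)^*\,(M_2)^*$, and instantiating the Church boolean's quantifier at $(A)^*$ via $(\forall\,\mathrm{e})$ yields type $(A)^*\rightarrow(A)^*\rightarrow(A)^*$, so two applications give $(A)^*$ as required.

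I expect the main obstacle to be purely administrative rather than conceptual: reconciling \DLALB's two-zone contexts $\Ga;\De$ with System F's single undifferentiated context, and making sure the structural rules $(Weak)$ and $(Cntr)$ are handled by explicitly invoking their admissibility in System F. The $(B\,\mathrm{e})$ case is where one must be slightly attentive, because \DLALB\ shares the context additively across the three branches whereas the translated term is a plain application; but since System F contexts are reusable (non-linear) this presents no difficulty, and the $\forall$-instantiation of the Church boolean supplies exactly the conditional behaviour. No step requires anything beyond the standard weakening, contraction, and substitution lemmas for System F, so the proof is a routine, if slightly tedious, rule-by-rule verification.
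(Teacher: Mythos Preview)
Your proposal is correct and follows exactly the approach the paper indicates: a straightforward structural induction on the \DLALB\ derivation. The paper's own proof consists of the single sentence ``By induction on the structure of the type derivation of $t$,'' so your case-by-case analysis simply spells out in full the details the paper leaves implicit.
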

\begin{proof}
By induction on the structure of the type derivation of $t$.
\end{proof}

\begin{lem}\label{typdlalbtypFredux}
  Let $t$ and $t'$ be two terms of $\LaB$ such that $\Ga;\De \vdash t:A$ and $t \xrightarrow{\de \be} t'$ then: $(t)^* \xrightarrow{\be} (t')^*$, $(\Ga)^*,(\De)^* \vdash_{F} (t)^*:(A)^*$ and $(\Ga)^*,(\De)^* \vdash_{F} (t')^*:(A)^*$.
\end{lem}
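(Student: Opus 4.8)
The plan is to prove Lemma~\ref{typdlalbtypFredux} by combining the two previous lemmas with a structural analysis of how the translation $()^*$ interacts with reduction. The statement has three conjuncts, but the second one, $(\Ga)^*,(\De)^* \vdash_{F} (t)^*:(A)^*$, is immediate from Lemma~\ref{typdlalbtypF} applied to the hypothesis $\Ga;\De \vdash t:A$. The third conjunct then follows from the first two: once I know $(t)^* \xrightarrow{\be} (t')^*$ and that $(t)^*$ is typable with $(A)^*$ in System~F, subject reduction for System~F (a standard fact I may assume as part of working in System~F) yields $(\Ga)^*,(\De)^* \vdash_{F} (t')^*:(A)^*$. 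Alternatively, and more self-containedly, I can obtain the third conjunct by first applying Theorem~\ref{subjredthm} (Subject Reduction for \DLALB) to get $\Ga;\De \vdash t':A$ from $t \xrightarrow{\de\be} t'$, and then applying Lemma~\ref{typdlalbtypF} again to $t'$. This second route avoids invoking System~F subject reduction and keeps the proof aligned with the machinery already set up in the excerpt.

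So the real content is the first conjunct: $(t)^* \xrightarrow{\be} (t')^*$. Here I would split on whether the reduction step $t \xrightarrow{\de\be} t'$ is a $\be$-step or a $\de$-step, and argue by induction on the context in which the redex sits, using that $()^*$ is defined compositionally on term structure so that it commutes with contexts. For a $\be$-redex, the crucial observation is that $()^*$ commutes with substitution, i.e. $((\la x.s)\,r)^* = (\la x.(s)^*)\,(r)^* \xrightarrow{\be} (s)^*\sub{(r)^*}{x} = (s\sub{r}{x})^*$; I would state and use a substitution lemma $(s\sub{r}{x})^* = (s)^*\sub{(r)^*}{x}$, proved by a routine induction on $s$. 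For a $\de$-redex, I would check the two cases directly from the definition of $()^*$ on the conditional: since $(T)^* = \la x.\la y.x$ and $(F)^* = \la x.\la y.y$, the translation of $if\ T\ then\ u\ else\ v$ is $(\la x.\la y.x)\,(u)^*\,(v)^*$, which $\be$-reduces in two steps to $(u)^*$, matching the $\de$-reduct $u$; the $F$ case is symmetric and reduces to $(v)^*$. Thus every $\de$-step is simulated by $\be$-steps on the translation, which is exactly why the lemma records $(t)^* \xrightarrow{\be} (t')^*$ rather than a $\de\be$-reduction.

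The main obstacle, and the only place demanding care, is the substitution lemma together with the bookkeeping for the contextual closure: I must make sure that the translation of the redex, placed inside the translation of the surrounding context, $\be$-reduces to the translation of the contractum in that same context. Because $()^*$ is a strict homomorphism on the term constructors $\la x.(-)$, application, and the conditional, the induction on the context is entirely mechanical once the base redex cases are settled. A minor subtlety worth flagging is that a single $\de$-step on $t$ corresponds to \emph{two} $\be$-steps on $(t)^*$, so the simulation is not step-for-step but still lands in $\xrightarrow{\be}$ under the reflexive-transitive reading of that arrow; I would note this explicitly so the reader sees why the relation on the right is $\xrightarrow{\be}$. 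With these pieces in hand the proof assembles cleanly, and I expect no genuine difficulty beyond the substitution lemma.
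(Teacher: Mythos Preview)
Your proposal is correct and follows essentially the same approach as the paper, which proves the lemma in one line: ``By the definition of the translation of the terms, Lemma~\ref{typdlalbtypF} and Theorem~\ref{subjredthm}.'' Your expansion of the first conjunct via a substitution lemma and a case split on $\be$- versus $\de$-redexes is exactly what ``by the definition of the translation'' is hiding, and your preferred route for the third conjunct (apply \DLALB\ subject reduction, then Lemma~\ref{typdlalbtypF}) is precisely the paper's route.

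One small sharpening: since this lemma is the engine behind Theorem~\ref{strongnorm}, you want the simulation to produce at least one $\be$-step, not merely zero-or-more; otherwise an infinite $\de\be$-sequence in $\LaB$ would not yield an infinite $\be$-sequence in System~F. Your analysis already shows this (one step for a $\be$-redex, two for a $\de$-redex), so just phrase the conclusion as $(t)^* \xrightarrow{\be}^{+} (t')^*$ rather than invoking a reflexive-transitive reading.
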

\begin{proof}
By the definition of the translation of the terms, Lemma \ref{typdlalbtypF} and Theorem \ref{subjredthm}.
\end{proof}

\begin{thm}\label{strongnorm}(Strong Normalisation)\\
  Let $t$ be a term of $\LaB$, if $\Ga;\De \vdash t:A$ then $t$ is strongly normalizable.
\end{thm}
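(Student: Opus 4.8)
The plan is to reduce strong normalisation for $\DLALB$ to that of System F through the translation $()^*$, exploiting the fact that this translation turns every reduction step into at least one genuine $\beta$-step. I would argue by contradiction. First I would recall that System F is strongly normalising on typeable terms (Girard's theorem); combined with Lemma \ref{typdlalbtypF}, which shows that $(t)^*$ is typeable in System F whenever $\Ga;\De\vdash t:A$, this gives that $(t)^*$ admits no infinite $\xrightarrow{\be}$ sequence.

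The crucial observation is that a single $\xrightarrow{\de \be}$ step on a well-typed term is simulated by a \emph{strictly positive} number of $\beta$-steps on its translation. For a $\beta$-redex this is clear, since $()^*$ commutes with substitution and is a homomorphism for abstraction and application, so one source step maps to exactly one target step. For a $\de$-step the point is that the booleans are sent to the nontrivial terms $(T)^*=\la x. \la y. x$ and $(F)^*=\la x. \la y. y$; thus $(if\ T\ then\ u\ else\ v)^* = (\la x.\la y.x)\,(u)^*\,(v)^*$ $\beta$-reduces in two steps to $(u)^*$, and symmetrically for $F$. Either way the number of $\beta$-steps is at least one. This strict simulation, already recorded in Lemma \ref{typdlalbtypFredux}, is what I regard as the heart of the matter, since it is exactly what forbids a reduction step from collapsing to nothing on the System F side.

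Finally, suppose toward a contradiction that $t$ with $\Ga;\De\vdash t:A$ admits an infinite reduction $t=t_0 \xrightarrow{\de \be} t_1 \xrightarrow{\de \be} t_2 \xrightarrow{\de \be}\cdots$. By Subject Reduction (Theorem \ref{subjredthm}) each $t_i$ is again well-typed, so Lemma \ref{typdlalbtypFredux} applies at every step and yields $(t_i)^* \xrightarrow{\be}^{+} (t_{i+1})^*$. Concatenating these produces an infinite $\beta$-reduction starting from $(t_0)^*=(t)^*$, contradicting the strong normalisation of $(t)^*$ in System F. Hence no such infinite sequence exists and $t$ is strongly normalisable.

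The only delicate point to verify carefully is the strictness of the simulation: one must check that substitution commutes with $()^*$ so that $\beta$-steps are preserved exactly, and that the conditional genuinely produces $\beta$-redexes after translation. Both are guaranteed by the specific choice of $(T)^*$ and $(F)^*$ in Definition \ref{traductermsystemF}; beyond this, the argument is a routine transfer of an infinite source reduction to an infinite target reduction.
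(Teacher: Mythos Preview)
Your proposal is correct and follows essentially the same approach as the paper: the paper's proof is the one-liner ``By Lemma~\ref{typdlalbtypFredux} and the property of strong normalization of terms typeable in System F,'' and your argument simply unpacks this, making explicit the contradiction argument and the strictness of the simulation (one $\beta$-step for a source $\beta$-step, two for a $\delta$-step). Your added care about strictness is well placed, since Lemma~\ref{typdlalbtypFredux} as stated writes a single $\xrightarrow{\be}$ rather than $\xrightarrow{\be}^{+}$, but this does not affect the overall strategy.
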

\begin{proof}
By Lemma \ref{typdlalbtypFredux} and the property of strong normalization of terms typeable in System F.
\end{proof}

\begin{thm}\label{confl}(Confluence)\\
  The $\de \be$-reduction is confluent on the terms of $\LaB$ typeable in $\DLALB$.
\end{thm}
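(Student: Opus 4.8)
The plan is to combine the strong normalisation property (Theorem~\ref{strongnorm}) with Newman's Lemma: for a strongly normalising relation, global confluence follows from local confluence. Since every \DLALB-typeable term is strongly normalising for $\xrightarrow{\de \be}$, it suffices to establish local confluence, i.e.\ that whenever $t \xrightarrow{\de \be} t_1$ and $t \xrightarrow{\de \be} t_2$ are obtained by contracting two redexes $R_1$ and $R_2$, the reducts $t_1$ and $t_2$ share a common $\xrightarrow{\de \be}$-reduct. I would then prove local confluence by a case analysis on the relative positions of $R_1$ and $R_2$ inside $t$.

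The crucial observation is that $\beta$-redexes and $\delta$-redexes cannot overlap at the root: a $\beta$-redex has the shape $(\la x. M)\ N$ whereas a $\delta$-redex has the shape $if\ b\ then\ u\ else\ v$ with $b \in \{T, F\}$, and these head constructors are distinct; moreover $T$ and $F$ are normal forms, so the guard of a $\delta$-redex contains no redex and reducing the guard is never in competition with firing a $\delta$-step. Consequently there are no genuine critical pairs between the two reductions, and the only configurations left to treat are (i) $R_1$ and $R_2$ occupying disjoint subterms, where they trivially commute, and (ii) one redex lying strictly inside the other.

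For the nested case there are two subcases. If the outer redex is the $\beta$-redex $(\la x. M)\ N$ and $R_2$ lies in $M$ or $N$, convergence follows from the usual substitution argument: contracting the inner redex before or after the $\beta$-step only duplicates or erases its residuals according to the occurrences of $x$, and both orders reach the same term. If the outer redex is the $\delta$-redex $if\ b\ then\ u\ else\ v$, then $R_2$ must lie in $u$ or in $v$ (not in the constant $b$); firing $\delta$ selects one branch, so the residual of $R_2$ either survives in the chosen branch or is discarded together with the other, and in either case the two reducts converge. Two nested $\delta$-redexes are handled identically. The one point requiring care is the $\beta$ subcase with duplication, which I would settle by the standard substitution lemma exactly as in the pure $\la$-calculus; the booleans introduce no new difficulty here, precisely because the $if$-guard is always a normal constant whenever a $\delta$-step is enabled. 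Having verified local confluence, Newman's Lemma yields confluence of $\xrightarrow{\de \be}$ on \DLALB-typeable terms.
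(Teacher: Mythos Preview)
Your proposal is correct and follows exactly the same route as the paper: invoke Newman's Lemma using the strong normalisation of \DLALB-typeable terms (Theorem~\ref{strongnorm}) together with the local confluence of $\xrightarrow{\de\be}$ on $\LaB$. The paper simply asserts local confluence without spelling out the case analysis, whereas you give the details; your treatment of the (non-)overlap between $\beta$- and $\delta$-redexes is accurate.
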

\begin{proof}
By Theorem \ref{strongnorm} and the local confluence of the $\de \be$-reduction on $\LaB$.
\end{proof}

\begin{thm}\label{normform}(Normal Form)\\
  Let $t$ be a term of $\LaB$, if $\Ga;\De \vdash t:A$ then $t$ has a unique normal form (denoted $Norm(t)$).
\end{thm}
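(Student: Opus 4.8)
The plan is to derive this statement directly from the two preceding theorems, since existence and uniqueness of the normal form correspond respectively to strong normalisation and to confluence. First I would establish existence. By hypothesis $\Ga;\De \vdash t:A$ is derivable, so Theorem \ref{strongnorm} ensures that $t$ is strongly normalizable; that is, every $\de\be$-reduction sequence issuing from $t$ is finite. Taking any maximal such sequence, its last term $n$ admits no $\de\be$-redex and is therefore a normal form reachable from $t$. Hence $t$ possesses at least one normal form.

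Next I would establish uniqueness. Suppose $t$ reduces by $\de\be$-steps both to a normal form $n_1$ and to a normal form $n_2$. Since $t$ is typeable in \DLALB, Theorem \ref{confl} applies: the $\de\be$-reduction is confluent, so $n_1$ and $n_2$ admit a common reduct $w$. But a normal form cannot be reduced further, so the reductions from $n_1$ and from $n_2$ to $w$ must both be empty, whence $n_1 = w = n_2$. Thus the normal form of $t$ is unique, and we may legitimately denote it $Norm(t)$.

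This is essentially the classical observation that strong normalisation together with confluence yields both existence and uniqueness of normal forms, so I do not anticipate any real difficulty. The only point requiring a little care is that the two invoked theorems must speak of the same reduction relation and the same class of terms: Theorem \ref{confl} provides confluence only for terms typeable in \DLALB, which is exactly our hypothesis on $t$, and Theorem \ref{strongnorm} supplies strong normalisation under the same assumption, so both results apply to $t$ simultaneously.
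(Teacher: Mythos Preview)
Your proposal is correct and follows exactly the paper's approach: the paper's own proof simply cites Theorems \ref{strongnorm} and \ref{confl}, and you have merely spelled out the standard argument that strong normalisation plus confluence yield existence and uniqueness of normal forms.
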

\begin{proof}
By Theorems \ref{strongnorm} and \ref{confl}.
\end{proof}

\begin{lem}\label{normalformlemma}
  If $; \vdash t:\pa^n\BOO$ then:
\begin{enumerate}
\item $t$ is not an abstraction
\item if $t$ is normal for the $\be\de$-reduction then $t=T$ or $t=F$.
\end{enumerate}
\end{lem}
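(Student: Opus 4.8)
The plan is to establish the two items in order, using item~1 as the key ingredient for item~2. Item~1 is a \emph{generation} (inversion) statement restricting which types an abstraction may receive, while item~2 is a structural analysis of closed normal terms that feeds on item~1.

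For item~1 I would not reason directly about the shape $\pa^n\BOO$, but prove the stronger and more robust invariant: \emph{if $\Ga;\De\vdash \la x.s:C$ is derivable, then $C$ contains at least one arrow occurrence ($\fm$ or $\fli$)}. Since $\pa^n\BOO$ is built only from $\pa$ and $\BOO$ and contains no arrow, this immediately gives item~1. I would prove the invariant by induction on the typing derivation, inspecting the last rule. The introduction rules $(\fm\,\mathrm{i})$ and $(\fli\,\mathrm{i})$ produce an arrow type outright. The rules $(\pa\,\mathrm{i})$, $(\forall\,\mathrm{i})$ and (Weak) keep the subject an abstraction and merely wrap the type with a $\pa$, a $\forall$, or nothing, so they preserve the presence of an arrow through the induction hypothesis. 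The crucial case is $(\forall\,\mathrm{e})$, where the type passes from $\forall\al.A$ to $A[B/\al]$: the point is that type substitution never erases an arrow, so if $\forall\al.A$ (hence $A$) contains an arrow, then so does $A[B/\al]$.

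The delicate cases — and what I expect to be the main obstacle — are the two rules that rewrite the subject by substitution, (Cntr) and $(\pa\,\mathrm{e})$. For (Cntr) the substitution is variable-for-variable, so it can neither create nor destroy a leading $\la$: the premise subject is again an abstraction with the same type, and the induction hypothesis applies verbatim. For $(\pa\,\mathrm{e})$, whose conclusion is $t[u/x]$, the subject is an abstraction either because $t$ already is one (apply the hypothesis to the right premise), or because $t=x$ and the substituted term $u$ is the abstraction; in the latter case the abstraction is typed in the \emph{left} premise with type $\pa A$, which is arrow-containing by the induction hypothesis, and one must still transport this to the conclusion type $B$. I would close this gap with a parallel induction proving the companion fact that the type assigned to a variable never loses the arrows of its declared type along a derivation (again because $\forall$-generalization, $\pa$, and instantiation all preserve arrows); applied to the right premise this forces $B$ to contain an arrow.

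For item~2, let $t$ be closed, $\be\de$-normal, with $;\vdash t:\pa^n\BOO$, and argue by induction on the size of $t$ via the canonical composition $t=N_0\,N_1\cdots N_m$ of Definition~\ref{canoncomp}. If $t$ is $T$ or $F$ we are done. The head $N_0$ cannot be a variable, since by the Free Variable Lemma $\FV{t}=\emptyset$ while a variable head would be free in $t$. If $N_0=\la y.s$, then either $m=0$, contradicting item~1 (an abstraction cannot be typed $\pa^n\BOO$), or $m\ge 1$, so $N_0\,N_1$ is a $\be$-redex, contradicting normality. Finally, if $N_0=if\ M_0\ then\ M_1\ else\ M_2$, a generation argument for the conditional (the only rule introducing an if-expression is $(B\,\mathrm{e})$) shows that the guard $M_0$ receives a type $\pa^k\BOO$; as $M_0$ is a closed subterm, the Free Variable Lemma lets us type it in the empty context, and it is normal and strictly smaller, so the induction hypothesis gives $M_0=T$ or $M_0=F$, whence $t$ contains a $\de$-redex, again contradicting normality. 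Thus the only surviving cases are $t=T$ and $t=F$.
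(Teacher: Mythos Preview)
Your proposal is correct and follows the same high-level scheme as the paper's (very terse) proof: item~1 by induction on derivations, item~2 by structural induction on terms using item~1. You supply considerably more detail than the paper does, correctly isolate the delicate $(\pa\,\mathrm{e})$ case for item~1, and handle it via the natural companion invariant on variables.

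One small remark on item~2: your parenthetical ``the only rule introducing an if-expression is $(B\,\mathrm{e})$'' is not literally true, since $(\pa\,\mathrm{e})$ and (Cntr) may also produce an if-headed subject by substitution---exactly the phenomenon you treated carefully in item~1. The generation argument you need still goes through with the same case split (either the right premise already has the if-head, or the substituted variable is the head and the left premise types the if-expression), but be aware that the inversion is not a one-step lookup of the last rule; it requires the same inductive descent through $(\pa\,\mathrm{e})$, $(\pa\,\mathrm{i})$, $(\forall\,\mathrm{i}/\mathrm{e})$, (Weak), (Cntr), and the application eliminations when $m\ge 1$.
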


\begin{proof}
\begin{enumerate}
\item By induction on the structure of derivations.\\
\item By induction on the structure of terms and (i).
\end{enumerate}
\end{proof}


\subsection{Stratified terms}

We have to describe the size of a term in detail in order to better control it during $\beta$- and $\de$-reduction.

\begin{definition1}\label{stratterm}
A stratified term is a term with each abstraction symbol $\la$ annotated by a natural number k (called its depth) and also possibly by symbol !, and with applications possibly annotated by !.
\end{definition1}

	Thus an abstraction looks like $\la^{k} x.t$ or $\la^{k!} x.t$ and an application like $t\ u$ or $t\ !\ u$. When $t$ is a stratified term, $t[+1]$ denotes $t$ with the depths of all abstraction subterms increased by 1. The type assignment rules for stratified terms are obtained by modifying some of the rules of \DLALB\ as follows:

\begin{tabular}{c@{}cc}

 &{\infer[\mbox{($\fm$ i)}]{\Gamma; \Delta \vdash \la^0 x. t: A \fm B }
 {\Gamma; \Delta, x:A \vdash t:B}}
  &{\infer[\mbox{($\fli$ i)}]{\Gamma; \Delta \vdash \la^{0!} x. t: A \fli B }
 {\Gamma, x:A ; \Delta\vdash t:B}}\\[1ex]

& {\infer[\mbox{($\fli$ e)}]{\Gamma, z:C ; \Delta \vdash t\ !\ u[+1] :B }
  {\Gamma; \Delta \vdash t:A \fli B & ; z:C \vdash u:A}}
  &{\infer[\mbox{($\pa$ i)}]{\Gamma;\pa\Delta\vdash t[+1]: \pa A }
 { ;\Gamma, \Delta \vdash t:A}}\\[1ex]  
    
 \end{tabular}

	The depth of a term is the maximal depth of all the abstractions it contains.

\begin{lem}\label{decorlemma}
  Given a \DLALB\ derivation $\mathcal{D}$ of $\Ga;\De \vdash t:A$ of depth d,
$t$ can be decorated as a stratified term $t'$ of depth d such that $\Ga;\De \vdash t':A$.
\end{lem}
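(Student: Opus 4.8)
The plan is to prove Lemma~\ref{decorlemma} by induction on the structure of the derivation $\mathcal{D}$ of $\Ga;\De \vdash t:A$. The statement to be proved is exactly that the underlying term $t$ admits a \emph{stratified} decoration $t'$ of the same depth $d$ that is typable by the same judgement in the modified (stratified) rule system. For each typing rule of \DLALB\ I would show how to build the decoration of the conclusion from the decorations supplied by the induction hypotheses for the premises, keeping track of the depth annotations. The four rules that carry genuine information about depth are precisely the ones whose stratified versions are spelled out in the table just before the lemma, namely ($\fm$ i), ($\fli$ i), ($\fli$ e) and ($\pa$ i); all the remaining rules either leave the term untouched (e.g. ($\forall$ i), ($\forall$ e), (Weak)) or act only on variable bindings (e.g. (Cntr), (Id)), so for those the decoration of the premise is reused verbatim and the depth is unchanged.

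First I would treat the base case (Id), where $t=x$ carries no abstraction and the depth is $0$; the empty decoration trivially works. Then for the inductive step I would go rule by rule. For ($\fm$ i) the induction hypothesis gives a stratified $t'$ of depth $d$ with $\Ga;\De,x{:}A\vdash t':B$, and I annotate the new abstraction with depth $0$ to form $\la^{0}x.t'$, matching the stratified ($\fm$ i) rule; similarly ($\fli$ i) produces $\la^{0!}x.t'$. The two depth-raising rules are the crux: for ($\pa$ i) the stratified rule replaces the premise decoration $t'$ by $t'[+1]$, so I would invoke the definition of the operator $t[+1]$ (which increments every abstraction depth by $1$) and observe that this raises the depth from $d-1$ to $d$ exactly as required; the ($\fli$ e) case is analogous, decorating the application as $t'\ !\ u'[+1]$ and shifting the right-hand argument.

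The main obstacle, and the place where the bookkeeping must be handled carefully, is showing that the decorated term retains depth exactly $d$ and that this depth coincides with the depth of the derivation as defined in Definition~\ref{depderivation}. Because the depth of a derivation counts premises of ($\pa$ i) together with r.h.s.\ premises of ($\fli$ e) along a branch, and these are precisely the two rules whose stratified versions apply $[+1]$, I would argue that each such rule contributes exactly one unit both to the derivation depth and to the maximal abstraction depth appearing on the affected subterm, so the two quantities stay in lockstep along every branch. The one subtlety is that the binary rules ($\fm$ e), ($\fli$ e) and ($\pa$ e) merge two subderivations of possibly different depths; here the overall depth is the maximum of the two, and I would check that the $[+1]$ shift applied to the right argument in ($\fli$ e) is consistent with taking that maximum, i.e.\ that no abstraction in the left subterm is inadvertently shifted and that the depths agree on the shared conclusion. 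Once this accounting is verified, the induction closes and yields the required stratified decoration $t'$ of depth $d$.
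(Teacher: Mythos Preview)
Your proposal is correct and takes exactly the same approach as the paper, which records the proof in a single line: ``By induction on the structure of the derivation $\mathcal{D}$.'' Your case analysis fleshes this out accurately, and your observation that the two depth-incrementing rules $(\pa\ \mathrm{i})$ and the r.h.s.\ of $(\fli\ \mathrm{e})$ are precisely the ones applying $[+1]$ is the key bookkeeping point; one small caveat is that for terms containing no abstractions (e.g.\ $T$ under repeated $(\pa\ \mathrm{i})$) the term depth may fall strictly below the derivation depth, so ``depth $d$'' should really be read as ``depth at most $d$'', but this is a wrinkle in the lemma statement itself rather than in your argument.
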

\begin{proof}
By induction on the structure of the derivation $\mathcal{D}$.
\end{proof}

We can see that $\forall \pa$-Normalisation Lemma, Abstraction Property Lemma, Paragraph Property Lemma and Subject Reduction Theorem hold for stratified terms as well (as in \cite{Bai}).

\begin{definition1}\label{numbocclambdaterm}
The number of occurences of symbols $\la$ at depth k in a stratified term is inductively defined on the structure of the terms as follows: $\no{k}{x}=0$, $\no{k}{F}=0$, $\no{k}{T}=0$, $\no{k}{\la^k x. t}=\no{k}{t}+1$, $\no{k}{\la^p x. t}=\no{k}{t}$, $\no{k}{t\ u}=\no{k}{t}+\no{k}{u}$, $\no{k}{if\ t_0\ then\ t_1\ else\ t_2}=\displaystyle \max_{i} \no{k}{t_i}$.
The definition of the number of occurences of $if$ in a term $t$, $\no{if}{t}$, is similar.
\end{definition1}

\begin{lem}\label{applicationlemma}
  Let $t$ be a stratified term such that $\Ga;\De \vdash t:A$ is derivable. If $(v\ !\ u)$ is a subterm of $t$ then:
\begin{itemize}
\item $(FV(u)=\emptyset)\\
       or\\
       (FV(u)=\{x\}\ and\ (x\in Dom(\Ga)\ or\ x\ is\ bound\ in\ t\ by\ a\ \la\ annotated\ by\ !)\ and\\ \no{x}{u}=1)$
\item if $v=\la^{k!} x. r$ then $\forall p\le k,\ \no{p}{u}=0$
\end{itemize}
\end{lem}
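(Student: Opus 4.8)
The plan is to argue by induction on the stratified derivation $\mathcal{D}$ of $\Ga;\De\vdash t:A$, establishing both items simultaneously for every subterm of the form $(v\ !\ u)$. The key structural fact is that the marker $!$ on an application is produced only by the rule $(\fli e)$, so the set of "!"-applications of a conclusion is exactly those inherited from the premises, together with one genuinely new top-level application in the $(\fli e)$ case, namely $t_0\ !\ w[+1]$ with premises $\Ga;\De\vdash t_0:A'\fli B'$ and $;z:C\vdash w:A'$. For this new application the first item is immediate from the Free Variable Lemma (Lemma~\ref{freevariablelemma}): its last clause gives $FV(w)\subseteq\{z\}$ and $\no{z}{w}\le1$, and since $[+1]$ alters neither free variables nor occurrence counts we get either $FV(u)=\emptyset$, or $FV(u)=\{z\}$ with $\no{z}{u}=1$; moreover $(\fli e)$ puts $z$ into the non-linear zone of the conclusion, so $z\in Dom(\Ga,z:C)$, which is the first disjunct. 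For the inherited applications the first item follows from the induction hypothesis: the elimination and structural rules only enlarge the non-linear zone, and the sole binder that can ever capture such a variable is the one created by $(\fli i)$, which is annotated by $!$ (the rule $(\fm i)$ binds only linear variables); hence the disjunct $z\in Dom(\Ga)$ may turn into the disjunct "$z$ bound by a $!$-annotated $\la$" but never fails. One point to verify is that the substitution performed by $(\pa e)$ does not disturb the argument of any "!"-application: the substituted variable is linear and of a $\pa$-type, so by the first item (induction hypothesis) it never occurs in such an argument, and being of a $\pa$-type it can never be the head of a function of arrow type $A'\fli B'$ either.

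For the second item I would track how the depth annotations move. The only rules that change depths are $(\pa i)$ and the right premise of $(\fli e)$, and each applies $[+1]$ uniformly to the whole subterm it affects. Thus for an inherited application $(v\ !\ u)$ a uniform shift sends $\la^{k!}$ to $\la^{(k+1)!}$ while raising every depth occurring in $u$ by one, so the condition "$\no{p}{u}=0$ for all $p\le k$" is preserved intact. Consequently the induction step is pure bookkeeping for every rule except the freshly created application of the $(\fli e)$ case.

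The heart of the proof is the second item for the new application $t_0\ !\ w[+1]$ when $v=t_0=\la^{k!}x.r$. Since $w[+1]$ places all its abstractions one level deeper than in $w$, every abstraction of $u=w[+1]$ sits at depth $\ge1$; it therefore suffices to show that $k=0$, for then the required condition collapses to $\no{0}{u}=0$, which holds. In other words, I must show that the head $!$-abstraction of the \emph{function} of an $(\fli e)$ lies at depth $0$, and this is exactly where the type intervenes, the function having the modality-free type $A'\fli B'$. I would isolate this as a sub-lemma: for any stratified derivation, the depth of the head abstraction of a term equals the number of $\pa$ one meets before reaching an arrow, reading the type from the outside and passing freely through $\forall$-quantifiers. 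The sub-lemma is proved by a secondary induction on the derivation: $(\pa i)$ raises both the head depth and this count by one, $(\forall i)$ and $(\forall e)$ leave both unchanged, and the other rules are immediate. The one delicate case is $(\forall e)$, where instantiating a quantified variable could a priori expose a fresh leading $\pa$; this is excluded because the head of the function is an abstraction, and via the translation into System F (Lemma~\ref{typdlalbtypF}) the translate of an abstraction is a System F $\la$-abstraction, whose type, after stripping quantifiers, always has an arrow at its head and never a bare type variable — so the instantiation never acts at the head and the count is preserved. Applying the sub-lemma to $A'\fli B'$ yields $k=0$, and hence $\no{0}{u}=0$, completing the step. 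The main obstacle is precisely this depth/type correspondence for the function of an $(\fli e)$; once it is in place, everything else is routine propagation through the remaining rules.
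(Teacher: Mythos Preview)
Your proposal is correct and follows essentially the same approach as the paper, which proves the lemma ``by induction on the structure of the derivation and Lemma~\ref{freevariablelemma}'' without further detail. Your account simply unfolds what that one-line proof entails.

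The one place where you go beyond the paper is the sub-lemma you isolate for the second item (``depth of the head abstraction equals the number of leading $\pa$'s in the type''). This is precisely the \emph{Abstraction Property Lemma} that the paper imports from~\cite{Bai} for stratified terms just before stating the present lemma; the paper's proof tacitly relies on it to conclude $k=0$ at the moment of $(\fli\ e)$. Your route to this sub-lemma via the System~F translation (Lemma~\ref{typdlalbtypF}) is a legitimate and pleasant shortcut: it reduces the impossibility of an abstraction having a bare type variable after stripping leading $\forall$'s and $\pa$'s to the standard generation lemma for System~F abstractions. The cited paper instead proves it by a direct induction on the DLAL derivation. Both arguments work; yours is slightly more conceptual, the original slightly more self-contained. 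Either way, the overall architecture---induction on the derivation, Free Variable Lemma for item~1, Abstraction Property for item~2, and routine bookkeeping of the $[+1]$ shifts and substitutions---is the same.
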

\begin{proof}
By induction on the structure of the derivation and Lemma \ref{freevariablelemma}.
\end{proof}

We can now define, with the notations on a stratified term, a vector of integers which characterizes the size of the term.

\begin{definition1}\label{vectstratterm}
  Let $t$ be a stratified term,\\
  we define $vect_d(t)=(\no{0}{t},\ ...,\ \no{d}{t},\ \no{if}{t})$.
\end{definition1}

\begin{definition1}\label{infvect}
  Let $a$ and $b$ be two vectors of $\mathbb{Z}^p$, we define:
  \begin{itemize}
\item $a\le b$ if and only if $\forall k\le p,\ a_k\le b_k$;
\item $a<b$ if and only if $a\le b\ and\ a\ne b$.
  \end{itemize}
\end{definition1}

\begin{lem}\label{vectstratsubstlemma}
  If $t$ and $u$ are two stratified terms such that\\
  $r=\no{x}{t},\ a=vect_d(t)\ and\ b=vect_d(u)$, then\\
  $vect_d(t[u/x])\le a+r*b=(a_0+r*b_0,\ ...,\ a_{d+1}+r*b_{d+1})$.
\end{lem}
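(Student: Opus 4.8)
The plan is to argue by structural induction on the stratified term $t$, writing $r_i=\no{x}{t_i}$ and $a^{(i)}=vect_d(t_i)$ for the immediate subterms $t_i$ of $t$, and verifying in each case that the componentwise inequality $vect_d(t[u/x])\le a+r*b$ holds for the order of Definition~\ref{infvect}. Here $d$ is taken to bound the depths of both $t$ and $u$, so that all vectors lie in $\mathbb{Z}^{d+2}$ and $t[u/x]$ again has depth at most $d$. Throughout I would use that every coordinate of $b=vect_d(u)$ is a nonnegative integer, since these coordinates count occurrences of $\la$'s and of $if$'s.

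In the base cases $t=x$, $t=y$ (with $y\ne x$), $t=T$, $t=F$ the substitution either replaces $t$ by $u$ or leaves it unchanged, and one checks directly that $a+r*b$ equals $b$ or $a$ accordingly, so equality holds. For an abstraction $t=\la^{k}y.s$ or $t=\la^{k!}y.s$ (with $y$ not occurring in $u$, after $\alpha$-renaming) substitution does not touch the binding $\la$, so $vect_d(t)$ is $vect_d(s)$ with its $k$-th coordinate increased by $1$, the same increment appears for $t[u/x]$, and since $\no{x}{t}=\no{x}{s}$ the claim follows from the induction hypothesis on $s$ by adding the unit vector at coordinate $k$. For an application $t=s_1\,s_2$ (with or without $!$) both $vect_d$ and $\no{x}{\cdot}$ are additive over the two subterms, so $a=a^{(1)}+a^{(2)}$ and $r=r_1+r_2$; applying the induction hypothesis to each factor and summing the two inequalities gives $vect_d(t[u/x])\le(a^{(1)}+a^{(2)})+(r_1+r_2)*b=a+r*b$, using $(r_1+r_2)*b=r_1*b+r_2*b$.

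The main point, and the reason the statement is an inequality and not an equality, is the conditional case $t=if\ t_0\ then\ t_1\ else\ t_2$. Here both $\no{x}{\cdot}$ and every coordinate of $vect_d$ combine the three branches by $\max$: the $\la$-depth coordinates of $a$ are the componentwise maxima of the $a^{(i)}$, the $if$-coordinate of $a$ is $1$ plus the maximum of the $if$-coordinates of the $a^{(i)}$, and $r=\max_i r_i$. I would apply the induction hypothesis to each branch, obtaining $vect_d(t_i[u/x])\le a^{(i)}+r_i*b$, and then bound $vect_d(t[u/x])$ coordinatewise. Fixing a coordinate $\ell$, the $\ell$-th entry of $vect_d(t[u/x])$ equals $\max_i$ of the $\ell$-th entries of $vect_d(t_i[u/x])$ for a $\la$-depth coordinate, and $1$ plus that maximum for the $if$-coordinate; by the induction hypothesis this is bounded by ($1$ plus, in the $if$ case) $\max_i\bigl(a^{(i)}_\ell+r_i*b_\ell\bigr)$. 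Using $r_i\le r$ together with $b_\ell\ge 0$ gives $r_i*b_\ell\le r*b_\ell$, and since $r*b_\ell$ is then a constant added inside the maximum it factors out: $\max_i\bigl(a^{(i)}_\ell+r*b_\ell\bigr)=\bigl(\max_i a^{(i)}_\ell\bigr)+r*b_\ell$, which is exactly the $\ell$-th coordinate of $a+r*b$ (the extra $+1$ for the $if$-coordinate likewise commutes with the maximum). This closes the case and hence the induction. The only subtlety to keep straight is precisely this interchange of $\max$ with the addition of $r*b$, which relies essentially on the nonnegativity of the coordinates of $b$.
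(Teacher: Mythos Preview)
Your proof is correct and follows exactly the approach the paper indicates, namely structural induction on $t$; the paper's own proof is the single line ``By induction on the structure of the term $t$.'' You have simply spelled out the cases in full, correctly identifying the conditional case as the source of the inequality (via the interplay of $\max$ with the additive term $r*b$ and the nonnegativity of $b$).
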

\begin{proof}
By induction on the structure of the term $t$.
\end{proof}


\section{APTIME Soundness}

Usually, a complexity bound for $\LLLL$\ and related systems like \DLAL\ is obtained from a specific reduction strategy: the level by level strategy. Such strategy consists to reduce first redexes at level 0 then redexes at level 1 and so on. However, it is not possible to apply such strategy in the $\la$-calculus with the conditional constructor without breaking the polynomial bound. This is why like Gaboardi and al. we consider a $\la$-calculus machine to reduce the terms. A delicate point however is that previous work on $\LLLL$\ and \DLAL\ does not provide complexity bounds on $\la$-calculus machines. Thus, we need to introduce a suitable measure in order to prove this complexity bound.

\subsection{Definitions}

\begin{definition1}\label{program}(Programs)\\
A program is a term $t$ of $\LaB$ such that $; \vdash t:\pa^n \BOO$.\\
We define the relation $\leftarrow$ by:
\begin{itemize}
\item If $(Norm(t)=F)$ then $(no\leftarrow t)$;
\item If $(Norm(t)=T)$ then $(yes\leftarrow t)$.
\end{itemize}
\end{definition1}

\begin{definition1}\label{context}(Contexts)
\begin{itemize}
\item A context \AAA\ is a sequence of variable assignments of the shape $x_i:=t_i$ where all variables $x_i$ are distinct. The set of contexts is denoted by $C_{tx}$.
\item The cardinality of a context \AAA, denoted by $\#($\AAA$)$, is the number of variable assignments in \AAA.
\item The empty context is denoted by $\emptyset$.
\item Let \AAA$=[x_1:=t_1,\ ...,\ x_n:=t_n]$ be a context.
Then $()^{\mathcal{A}} : \LaB \rightarrow \LaB$ is the map that associates the term $t[t_n/x_n]...[t_1/x_1]$ to each term $t$.
\end{itemize}
\end{definition1}

\begin{definition1}\label{configuration}(Configurations)\\
 There is 4 types of configurations:
 \begin{itemize}
\item a rejecting configuration: $\llbracket (Rejecting)\rrbracket$;
\item an accepting configuration: $\llbracket (Accepting)\rrbracket$;
\item an existensial configuration: $\llbracket(\exists)\ $\AAA$\ |\ \{b;\ t\}\rrbracket$ with \AAA\ a context, $t$ a term and $b\in\{yes;no\}$;
\item a universal configuration: $\llbracket(\forall)\ $\AAA$\ |\ \{b;\ t\}\ \{b';\ t'\}\rrbracket$ with \AAA\ a context, $t$ and $t'$ two terms and $b,b'\in\{yes;no\}$;
\end{itemize}
\end{definition1}

\begin{definition1}\label{AAM}
The Abstract Alternating Machine \KB\ (which is similar to the Krivine machine (\cite{Krivine}) when restricted to the $\lambda$-calculus) is a machine that takes as input a program $t$, starts with the initial configuration
$\llbracket(\exists)\ \emptyset\ |\ \{yes;\ t\}\rrbracket$ and reduces t using the two transition functions described in Figure \ref{KBrules}.
It accepts the program $t$ if its normal form is true and rejects it if its normal form is false (as will be shown below).
\end{definition1}

The base cases are obvious.
The $(\be)$ transition applies when the head of the subject is a $\be$-redex. Then the association between the bound variable and the argument is remembered in context \AAA.
The $(h)$ transition replaces the head occurence of the head variable by the term associated with it in the context.
The $(if)$ transitions, always followed by the $(if')$ transitions, perform the $\bar{\de}$ reductions (following the intuition that: $if\ t_0\ then\ t_1\ else\ t_2 = (t_0 \land t_1) \lor (\lnot t_0 \land t_2)$).

\begin{definition1}\label{computation}(Computations)\\
The computation of the Abstract alternating machine \KB\ is the tree obtained by applying the rules given in figure \ref{KBrules} starting from the initial configuration. The definition of a configuration accepted by \KB\ and of a computation accepted by \KB\ is the same as those of the Alternating Turing Machine.
\end{definition1}

 \begin{figure*}

  \begin{center}
\fbox{
\begin{tabular}{c}

\begin{tabular}{rcl}

  \scriptsize{$\llbracket(\exists)\ $\AAA$\ |\ \{b;\ \la x.N\ N_1\ ...\ N_p\}\rrbracket$} & $\xrightarrow[(\beta)]{1/2}$ &
  \scriptsize{$\llbracket(\exists)\ $\AAA$@(x':=N_1)\ |\ \{b;\ N[x'/x]\ N_2\ ...\ N_p\}\rrbracket(*)$}\\

  \scriptsize{$\llbracket(\exists)\ $\AAA$_1@(x:=N)@$\AAA$_2\ |\ \{b;\ x\ N_1\ ...\ N_p\}\rrbracket$} & $\xrightarrow[(h)]{1/2}$ &
  \scriptsize{$\llbracket(\exists)\ $\AAA$_1@(x:=N)@$\AAA$_2\ |\ \{b;\ N\ N_1\ ...\ N_p\}\rrbracket$}\\

  \scriptsize{$\llbracket(\exists)\ $\AAA$\ |\ \{b;\ (if\ M_0\ then\ M_1\ else\ M_2)\ N_1\ ...\ N_p\}\rrbracket$} & $\xrightarrow[(if)]{1}$ &
  \scriptsize{$\llbracket(\forall)\ $\AAA$\ |\ \{yes;\ M_0\}\ \{b;\ M_1\ N_1\ ...\ N_p\}\rrbracket$}\\

  \scriptsize{$\llbracket(\exists)\ $\AAA$\ |\ \{b;\ (if\ M_0\ then\ M_1\ else\ M_2)\ N_1\ ...\ N_p\}\rrbracket$} & $\xrightarrow[(if)]{2}$ &
  \scriptsize{$\llbracket(\forall)\ $\AAA$\ |\ \{no;\ M_0\}\ \{b;\ M_2\ N_1\ ...\ N_p\}\rrbracket$}\\

  \scriptsize{$\llbracket(\forall)\ $\AAA$\ |\ \{a;\ M_0\}\ \{b;\ N\}\rrbracket$} & $\xrightarrow[(if')]{1}$ &
  \scriptsize{$\llbracket(\exists)\ $\AAA$\ |\ \{a;\ M_0\}\rrbracket$}\\

  \scriptsize{$\llbracket(\forall)\ $\AAA$\ |\ \{a;\ M_0\}\ \{b;\ N\}\rrbracket$} & $\xrightarrow[(if')]{2}$ &
  \scriptsize{$\llbracket(\exists)\ $\AAA$\ |\ \{b;\ N\}\rrbracket$}\\
 \end{tabular}\\

\begin{tabular}{ccc}

  base & \scriptsize{$\llbracket(\exists)\ $\AAA$\ |\ \{yes;\ T\}\rrbracket \xrightarrow{1/2} \llbracket (Accepting)\rrbracket$} &
  \scriptsize{$\llbracket(\exists)\ $\AAA$\ |\ \{no;\ F\}\rrbracket \xrightarrow{1/2} \llbracket (Accepting)\rrbracket$}\\

  cases & \scriptsize{$\llbracket(\exists)\ $\AAA$\ |\ \{no;\ T\}\rrbracket \xrightarrow{1/2} \llbracket (Rejecting)\rrbracket$} &
  \scriptsize{$\llbracket(\exists)\ $\AAA$\ |\ \{yes;\ F\}\rrbracket \xrightarrow{1/2} \llbracket (Rejecting)\rrbracket$}\\

\end{tabular}\\

\scriptsize{(*) x' is a fresh variable. $1/2$ means $1$ or $2$.}\\

\end{tabular}
}
\end{center}
  \caption{The Rules of the Abstract Alternating Machine \KB}\label{KBrules}
\end{figure*}

From here until the end of the subsection \ref{correctKBsec}, we will fix a program $M$. Note that:
\begin{itemize}
\item $m=|M|$ (with $|M|$ the size of $M$);
\item $\mathcal{D}$ is a derivation of $; \vdash M:\pa^n \BOO$;
\item $d$ is the depth of $\mathcal{D}$;
\item $M'$ is the stratified term of depth $d$ associated with the term $M$;
\item $r=\displaystyle \max_{x} \no{x}{M}$ (with $r<m$ by definition).
\end{itemize}

\begin{definition1}\label{tkukvk}
Let $t_k,\ u_k,\ v_k : \mathbb{Z}^{d+2} \rightarrow \mathbb{Z}^{d+2}$ such that:
\begin{itemize}
\item $t_k(a)=(a_0,\ ...,\ a_{k-1},\ a_k-1,\ a_{k+1}+r(b_{k+1}+m),\ ...,\ a_{d+1}+r(b_{d+1}+m))$;
\item $u_k(a)=(a_0,\ ...,\ a_{k-1},\ a_k-1,\ a_{k+1},\ ...,\ a_{d+1})$;
\item $v_k(b)=(b_0,\ ...,\ b_k,\ b_{k+1}+m,\ ...,\ b_{d+1}+m)$.
\end{itemize}
\end{definition1}

We want to establish a complexity bound on the machine. For that, we define a measure on the vectors characterizing the size of terms such that this measure will decrease with $\beta$- and $\de$-reduction.

\begin{definition1}\label{measure}
Let $measure_{(i)} : \mathbb{Z}^{i+2} \times \mathbb{Z}^{i+2} \rightarrow \mathbb{Z}$ such that:
\begin{itemize}
\item $measure_{(-1)}(a_0,\ b_0)=a_0$
\item $measure_{(i+1)}((a_0,\ ...,\ a_{i+2}),\ (b_0,\ ...,\ b_{i+2}))=$\\
$measure_{(i)}((a_1+(r+1)(b_1+a_0*m)a_0,\ ...,\ a_{i+2}+(r+1)(b_{i+2}+a_0*m)a_0),$\\
            $(b_1+a_0*m,\ ...,\ b_{i+2}+a_0*m))$.
\end{itemize}
\end{definition1}

\begin{lem}\label{lemeasure}
$\forall k\ge -1,\ \forall a,b,a',b'\in \mathbb{N}^{k+2},$\\
                          $if\ a'\le a\ and\ b'\le b\ then\ measure_{(k)}(a',\ b')\le measure_{(k)}(a,\ b)$;\\
                          $if\ a'\le a,\ b'\le b\ and\ a'\ne a\ then\ measure_{(k)}(a',\ b') < measure_{(k)}(a,\ b)$.
\end{lem}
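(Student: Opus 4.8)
The plan is to prove both statements simultaneously by induction on $k\ge -1$, following the recursive shape of Definition \ref{measure}.

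For the base case $k=-1$ the vectors lie in $\mathbb{N}^{1}$ and $measure_{(-1)}(a_0,b_0)=a_0$, so both claims are immediate: $a'\le a$ gives $a'_0\le a_0$, and $a'\le a$ together with $a'\ne a$ forces $a'_0<a_0$. For the inductive step I assume both statements for $k=i$ and take $a,b,a',b'\in\mathbb{N}^{i+3}$. Unfolding Definition \ref{measure}, $measure_{(i+1)}(a,b)=measure_{(i)}(A,B)$ where, for $1\le j\le i+2$,
\[ A_j = a_j + (r+1)(b_j + a_0 m)a_0, \qquad B_j = b_j + a_0 m, \]
and $measure_{(i+1)}(a',b')=measure_{(i)}(A',B')$ with $A'_j,B'_j$ the analogous quantities built from $a',b'$.

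The first task is to show that the hypotheses are propagated by this transformation, i.e. $A'\le A$ and $B'\le B$ in $\mathbb{N}^{i+2}$. For $B'$ this is clear from $b'_j\le b_j$, $a'_0\le a_0$ and $m\ge 0$. For $A'$ it holds because $a'_j\le a_j$ and every factor of the product $(r+1)(b'_j+a'_0 m)a'_0$ is a nonnegative number bounded by the corresponding factor of $(r+1)(b_j+a_0 m)a_0$ (using $B'_j\le B_j$ and $a'_0\le a_0$). With $A'\le A$ and $B'\le B$ in hand, the monotonicity claim follows at once by applying the induction hypothesis to $A,B,A',B'$. For the strict claim, assume moreover $a'\ne a$, so some coordinate $j^\star$ has $a'_{j^\star}<a_{j^\star}$; by the induction hypothesis it now suffices to produce a single coordinate where $A'$ is strictly below $A$. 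If $j^\star\ge 1$ this is immediate, since $a_{j^\star}>a'_{j^\star}$ while the product term of $A'_{j^\star}$ is $\le$ that of $A_{j^\star}$, whence $A_{j^\star}>A'_{j^\star}$. The remaining case is when the strict decrease sits only at the head coordinate: $a'_0<a_0$ with $a'_j=a_j$ for all $j\ge 1$. Here one inspects coordinate $j=1$ (which exists since $i+2\ge 1$): $A_1-A'_1=(r+1)\big[(b_1+a_0 m)a_0-(b'_1+a'_0 m)a'_0\big]$, and expanding with $b_1\ge b'_1$ gives the lower bound $b'_1(a_0-a'_0)+m\big(a_0^2-(a'_0)^2\big)\ge m\big(a_0^2-(a'_0)^2\big)\ge m\ge 1>0$, using $a_0>a'_0\ge 0$ (so $a_0\ge 1$ and $a_0^2-(a'_0)^2\ge 1$) and $m=|M|\ge 1$. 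Thus $A_1>A'_1$, so $A'<A$, and the strict part of the induction hypothesis yields $measure_{(i+1)}(a',b')<measure_{(i+1)}(a,b)$.

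The only delicate point is this last case. The monotonicity half is a routine componentwise check, but strictness must survive the nonlinear (quadratic in $a_0$) folding of the head coordinate into the products, and this is precisely where the standing assumptions that all entries are nonnegative and that $m=|M|\ge 1$ are needed; without $m\ge 1$ a strict decrease confined to coordinate $0$ could be absorbed and the inequality would merely be $\le$.
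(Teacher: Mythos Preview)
Your proof is correct and follows the same approach as the paper, which simply writes ``By definition of $measure$''; you have merely unfolded the induction that this one-line justification leaves implicit. In particular you correctly isolate the only non-obvious point, namely the strict case when the decrease is confined to the head coordinate $a_0$, and you rightly observe that this relies on the standing assumption $m=|M|\ge 1$ (otherwise a strict drop at coordinate~$0$ could be absorbed by the quadratic folding and yield only $\le$); the paper does not make this dependence explicit.
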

\begin{proof}
By definition of $measure$.
\end{proof}

\begin{lem}\label{tkmeasure}
Let $a, b\in \mathbb{N}^{d+2}$.
\begin{itemize}
\item $\forall k\in \llbracket 0;\ d \rrbracket ,\ if\ t_k(a),v_k(b)\in \mathbb{N}^{d+2}\ then\ 0\le measure_{(d)}(t_k(a),\ v_k(b)) < measure_{(d)}(a,\ b)$;
\item $\forall k\in \llbracket 0;\ d+1 \rrbracket ,\ if\ u_k(a)\in \mathbb{N}^{d+2}\ then\ 0\le measure_{(d)}(u_k(a),\ b) < measure_{(d)}(a,\ b)$.
\end{itemize}
\end{lem}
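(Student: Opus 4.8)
The goal is Lemma \ref{tkmeasure}: for $a,b \in \mathbb{N}^{d+2}$, applying the transformations $t_k, v_k$ (or $u_k$) strictly decreases the measure (while keeping it nonnegative), provided the results stay in $\mathbb{N}^{d+2}$. Let me understand the structures.

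$measure_{(-1)}(a_0, b_0) = a_0$.

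$measure_{(i+1)}((a_0,...,a_{i+2}), (b_0,...,b_{i+2})) = measure_{(i)}(\text{shifted vectors})$ where the new vectors are, for $j=1,...,i+2$:
- new $a_j = a_j + (r+1)(b_j + a_0 \cdot m) \cdot a_0$
- new $b_j = b_j + a_0 \cdot m$

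So the recursion peels off the 0-th coordinate, uses $a_0$ as a multiplier, and folds information into the remaining coordinates.

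Now the transformations:
- $t_k(a) = (a_0,...,a_{k-1}, a_k - 1, a_{k+1} + r(b_{k+1}+m), ..., a_{d+1}+r(b_{d+1}+m))$
- $u_k(a) = (a_0,...,a_{k-1}, a_k - 1, a_{k+1},...,a_{d+1})$
- $v_k(b) = (b_0,...,b_k, b_{k+1}+m,...,b_{d+1}+m)$

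So $t_k$ decreases $a_k$ by 1 but *increases* the higher $a$-coordinates (those with index $>k$); $u_k$ just decreases $a_k$ by 1; $v_k$ increases higher $b$-coordinates.

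**Plan of proof.**

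The plan is to prove both statements by induction on $k$ (equivalently, by using the recursive definition of the measure). Let me think about the structure.

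The key observation is that $measure_{(d)}$ processes coordinates $0, 1, ..., d+1$ in sequence. Coordinate $a_0$ is the "leading" multiplier. The transformations $t_k, u_k, v_k$ for $k \geq 0$ modify coordinates with index $\geq k$.

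For the induction, I'd distinguish based on whether $k = 0$ or $k > 0$, because the recursive step of the measure peels off coordinate 0.

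Case $k > 0$: The transformation doesn't touch coordinate 0. So after one recursion step of $measure_{(d)}$, we get $measure_{(d-1)}$ applied to shifted vectors. I need to verify that the transformation "commutes" appropriately with the shift — specifically that applying $t_k$ (resp. $u_k, v_k$) to the original vectors and then shifting gives vectors that are $\leq$ (or produce the right comparison) with what you'd get by shifting first and applying $t_{k-1}$ (resp. $u_{k-1}, v_{k-1}$) to the shifted vectors. Then invoke the induction hypothesis at index $k-1$ and Lemma \ref{lemeasure} for monotonicity. This commutation is the main computational content.

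Case $k = 0$: Here the measure's recursion peels off exactly the coordinate being decremented. Since $a_0$ decreases by 1 (strictly), and $a_0$ is the output of $measure_{(-1)}$ at the base — no wait, $a_0$ is the multiplier at each level. The delicate point is that $a_0$ appears as a multiplier; decreasing it from $a_0$ to $a_0 - 1$ decreases the multiplicative contribution. The increases in higher coordinates ($+r(b_{k+1}+m)$ for $t_0$) must be dominated by the decrease coming from the factor $(r+1)(\cdots)a_0$ versus $(r+1)(\cdots)(a_0-1)$. This is where the precise coefficient $(r+1)$ (rather than $r$) matters.

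**The main obstacle and the core inequality.**

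The hard part will be the $k=0$ case of the $t_k$ statement, where I must show strict decrease even though $t_0$ increases the higher $a$-coordinates. I would expand one level of the measure recursion for both $measure_{(d)}(t_0(a), v_0(b))$ and $measure_{(d)}(a,b)$, and compare term by term. The decrement in $a_0$ (by one) reduces the leading coordinate and the multiplier $a_0 \mapsto a_0-1$ everywhere; I expect the gap $(r+1)(b_j + a_0 m)a_0 - (r+1)(b_j + (a_0{-}1)m)(a_0{-}1)$ to dominate the imposed increase $r(b_j + m)$ plus the $v_k$-induced increase $m$. I'd verify this inequality directly, using $a_0 \geq 1$ (which holds since $t_0(a) \in \mathbb{N}^{d+2}$ forces $a_0 - 1 \geq 0$, i.e. $a_0 \geq 1$; if $a_0 = 0$ the hypothesis $t_0(a) \in \mathbb{N}^{d+2}$ fails) and the fact that all coordinates are natural numbers. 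The nonnegativity claim $0 \leq measure_{(d)}(\cdots)$ is immediate from the fact that all arguments remain in $\mathbb{N}$ and the measure is built from sums and products of naturals, so I would dispatch it first. The $u_k$ statement is strictly easier than the $t_k$ one, since $u_k$ only decrements and never increases any coordinate, so it follows directly from Lemma \ref{lemeasure} (monotonicity and strictness) once the commutation with the recursion is established.
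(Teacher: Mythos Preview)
Your handling of nonnegativity and of the $u_k$ statement (immediately via Lemma~\ref{lemeasure}, since $u_k(a)\le a$ and $u_k(a)\ne a$) is correct, and your base-case computation for $k=0$ in the $t_k$ statement is the right calculation. The gap is in your inductive step for $t_k$ when $k>0$: the commutation inequality you plan to verify is \emph{false in the direction you need}. Write $(\hat a,\hat b)$ for the vectors produced from $(a,b)$ by one step of the recursion defining $measure_{(d)}$, and $(\tilde a,\tilde b)$ for those produced from $(t_k(a),v_k(b))$. One finds $\tilde b = v_{k-1}(\hat b)$ and $\tilde a_j = t_{k-1}(\hat a)_j$ for $j<k$, but for $j\ge k$
\[
\tilde a_j \;-\; t_{k-1}(\hat a)_j \;=\; a_0\,m,
\]
which is strictly positive as soon as $a_0>0$. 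Hence $(\tilde a,\tilde b)\not\le (t_{k-1}(\hat a),v_{k-1}(\hat b))$ in general, and the chain ``Lemma~\ref{lemeasure} $+$ induction hypothesis at $k{-}1$'' cannot be closed.

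The argument that does work---and is what the paper's one-line proof points to---keeps your base-case computation but applies it \emph{at level $k$} after unrolling, rather than reducing to $t_{k-1}$. Since $t_k,v_k$ leave $a_0,\dots,a_{k-1}$ and $b_1,\dots,b_k$ untouched, the first $k$ recursion steps use identical leading coefficients on both sides; after these steps the leading $a$-entry on the $(t_k,v_k)$-side is exactly one less, the tail $b$-entries differ by $+m$, and the tail $a$-entries carry an accumulated nonnegative excess. Performing one more step (your $k=0$ computation, now with the level-$k$ vectors in place of $(a,b)$) yields coinciding $b$-vectors, while each $a$-entry on the $(t_k,v_k)$-side is strictly smaller---the gap works out to at least $m\bigl((r{+}1)A^{(k)}_0-r\bigr)\ge m\ge 1$, since $A^{(k)}_0\ge 1$. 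Lemma~\ref{lemeasure} then gives the strict inequality. In short: unroll past the decremented coordinate and compare there; do not try to descend to $t_{k-1}$.
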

\begin{proof}
By Lemma \ref{lemeasure} and definitions of $measure,\ t,\ u\ and\ v$.
\end{proof}

\begin{definition1}\label{TTM}
 The Transformation Tree of $M$, \TM, describes the computation tree of \KB\ on the input $M$ and contains nodes which are elements of the set $\mathbb{N} \times (Ctx \times \LaB) \times (\mathbb{N}^{d+2} \times \mathbb{N}^{d+2})$. This tree is inductively defined by the rules given in Figure \ref{Treerules}. Note that the terms in this tree are decorated (these terms are stratified terms).
\end{definition1}

\TM\ will be used to bound the time of computation of \KB\ on $M$.

 \begin{figure*}

  \begin{center}
\fbox{
\begin{tabular}{c}

\infer[\mbox{($\be$!)}]{$\scriptsize{$\llbracket(i)\ ($\AAA$\ |\ (\la^{k!} x.N\ !\ N_1)\ ...\ N_p)\ (a,\ b)\rrbracket$}$}
                       {$\scriptsize{$\llbracket(i+1)\ ($\AAA$@(x_i^! := N_1)\ |\ N[x_i^!/x]\ N_2\ ...\ N_p)\ (t_k(a),\ v_k(b))\rrbracket$}$}\\
\\

\infer[\mbox{($\be$)}]{$\scriptsize{$\llbracket(i)\ ($\AAA$\ |\ \la^{k} x.N\ N_1\ ...\ N_p)\ (a,\ b)\rrbracket$}$}
                       {$\scriptsize{$\llbracket(i+1)\ ($\AAA$@(x_i := N_1)\ |\ N[x_i/x]\ N_2\ ...\ N_p)\ (u_k(a),\ b)\rrbracket$}$}\\
\\

\infer[\mbox{(h)}]{$\scriptsize{$\llbracket(i)\ ($\AAA$_1@(x:=N)@$\AAA$_2\ |\ x\ N_1\ ...\ N_p)\ (a,\ b)\rrbracket$}$}
                       {$\scriptsize{$\llbracket(i)\ ($\AAA$_1@(x:=N)@$\AAA$_2\ |\ N\ N_1\ ...\ N_p)\ (a,\ b)\rrbracket$}$}\\
\\

\infer[\mbox{(if)}]{$\scriptsize{$\llbracket(i)\ ($\AAA$\ |\ (if\ M_0\ then\ M_1\ else\ M_2)\ N_1\ ...\ N_p)\ (a,\ b)\rrbracket$}$}
                       {$\scriptsize{$\llbracket(i)\ ($\AAA$\ |\ M_0)\ (u_{d+1}(a),\ b)\rrbracket$}$ & $\scriptsize{$\llbracket(i)\ ($\AAA$\ |\ M_1\ N_1\ ...\ N_p)\ (u_{d+1}(a),\ b)\rrbracket$}$ & $\scriptsize{$\llbracket(i)\ ($\AAA$\ |\ M_2\ N_1\ ...\ N_p)\ (u_{d+1}(a),\ b)\rrbracket$}$}\\
\\

\infer[\mbox{(root)}]{}{$\scriptsize{$\llbracket(0)\ (\emptyset\ |\ M')\ ((m,\ ...,\ m),\ (0,\ ...,\ 0))\rrbracket$}$}\\

\end{tabular}
}
\end{center}
  \caption{The Rules of the Transformation Tree of $M$: \TM}\label{Treerules}
\end{figure*}


\subsection{APTIME soundness of \KB}\label{aptimeKB}

\begin{lem}\label{positmeasure}
  Let $\llbracket (j),\ ($\AAA$|t),\ (a,\ b) \rrbracket$ be a node of \TM.
  \begin{enumerate}
  \item $for\ each\ (x_{i}^{!}:=t_i) \in \mathcal{A},\ vect_d((t_i)^{\mathcal{A}})\le b$.
  \item $vect_d((t)^{\mathcal{A}})\le a$.
  \end{enumerate}
\end{lem}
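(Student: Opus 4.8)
The plan is to prove both statements simultaneously by induction on the construction of \TM\ following the four node‑generating rules of Figure \ref{Treerules}, namely (root), ($\be$), ($\be$!), (h) and (if). Throughout I will use three facts recorded earlier: the substitution bound $vect_d(t[u/x])\le vect_d(t)+\no{x}{t}\cdot vect_d(u)$ (Lemma \ref{vectstratsubstlemma}), the occurrence bounds of Lemma \ref{freevariablelemma}, and the structural information of Lemma \ref{applicationlemma} on $!$-applications. I also use the elementary observation that \KB\ never substitutes inside the subject — it only renames bound variables and re‑associates a head‑spine — so every element $N_i$ of the canonical composition of any subject occurring in \TM\ is (a renaming of) a subterm of $M'$; in particular $\no{j}{N_i}\le m$ and $\no{if}{N_i}\le m$. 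For the base case (root) the context is empty, so the first statement is vacuous, and $(M')^{\emptyset}=M'$ gives $\no{j}{M'}\le m$ and $\no{if}{M'}\le m$, i.e.\ $vect_d(M')\le(m,\dots,m)$, which is the initial vector.

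For the three inductive cases that neither create a $!$-binding nor change $b$ the argument is short. In (h) the head variable $x$ is replaced by the term $N$ it is bound to in the context; since each assignment refers only to earlier variables, $(x\,N_1\dots N_p)^{\mathcal{A}}$ and $(N\,N_1\dots N_p)^{\mathcal{A}}$ denote the same term, so both statements pass to the child unchanged with the same $(a,b)$. In (if) the context and $b$ are untouched, so the first statement is the hypothesis; for the second, each child is $M_\ell\,N_1\dots N_p$ (or $M_0$), and since $\no{j}{if\ M_0\ then\ M_1\ else\ M_2}=\max_\ell\no{j}{M_\ell}$ for $j\le d$ while the head $if$ contributes $1$ to the last coordinate, every coordinate of the child is bounded by that of the parent with the $if$-coordinate strictly smaller by $1$, which is exactly $u_{d+1}(a)$. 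In ($\be$) the bound variable $x$ is linear, so Lemma \ref{freevariablelemma} gives $\no{x}{N}\le1$; writing $P=(N)^{\mathcal{A}}$, $Q_i=(N_i)^{\mathcal{A}}$ and using freshness of $x_i$, the child subject is $P[Q_1/x]\,Q_2\dots Q_p$, and Lemma \ref{vectstratsubstlemma} with $\no{x}{P}\le1$ together with the hypothesis $e_k+vect_d(P)+\sum_i vect_d(Q_i)\le a$ yields $vect_d(\text{child})\le vect_d(P)+\sum_i vect_d(Q_i)\le a-e_k=u_k(a)$. The first statement is inherited, since the new assignment $x_i:=N_1$ is linear, hence not among the $!$-assignments.

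The case ($\be$!) is the heart of the proof and the only place where the slacks in $t_k$ and $v_k$ are used. Write $P=(N)^{\mathcal{A}}$, $Q_1=(N_1)^{\mathcal{A}}$, $r'=\no{x}{N}$; after resolving the fresh $!$-variable the child subject is $P[Q_1/x]\,Q_2\dots Q_p$. By Lemma \ref{applicationlemma}, $N_1$ has at most one free variable $y$, which is then a $!$-variable with $\no{y}{N_1}=1$, and $\no{p}{N_1}=0$ for all $p\le k$. For the first statement, using $\no{j}{N_1}\le m$, $\no{y}{N_1}=1$ and the induction hypothesis $\no{j}{V}\le b_j$ for the value $V$ of $y$, I get $\no{j}{Q_1}\le\no{j}{N_1}+\no{j}{V}\le m+b_j=v_k(b)_j$ for $j>k$ and $\no{j}{Q_1}\le 0+\no{j}{V}\le b_j=v_k(b)_j$ for $j\le k$, so $vect_d(Q_1)\le v_k(b)$; the old $!$-assignments still fit because $v_k(b)\ge b$.

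For the second statement, Lemma \ref{vectstratsubstlemma} and the hypothesis $e_k+vect_d(P)+vect_d(Q_1)+\sum_{i\ge2}vect_d(Q_i)\le a$ give, coordinate by coordinate, $\no{j}{\text{child}}\le a_j-[j=k]+(r'-1)\no{j}{Q_1}$. Since $r'=\no{x}{N}\le r$, for $j>k$ the first‑statement bound $\no{j}{Q_1}\le b_j+m$ yields $(r'-1)\no{j}{Q_1}\le r(b_j+m)$, matching $t_k(a)_j$. The delicate point is $j\le k$, where $t_k(a)_j$ carries no slack: there I need $(r'-1)\no{j}{Q_1}\le0$, i.e.\ $\no{j}{Q_1}=0$ whenever the $!$-variable is genuinely duplicated ($r'\ge2$). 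By Lemma \ref{applicationlemma} we already have $\no{j}{N_1}=0$ for $j\le k$, so this reduces to $\no{j}{V}=0$ for $j\le k$, where $V$ is the value of the free $!$-variable $y$ of $N_1$. This is the main obstacle, and it is a depth‑discipline fact rather than a counting one: I expect to obtain it from the stratification, via the Abstraction Property and Paragraph Property lemmas inherited from \cite{Bai}, which confine the abstractions of the term denoted by a $!$-variable occurring inside an argument whose abstractions all lie at depth $>k$ to depth $>k$ as well. Concretely I would strengthen the induction hypothesis to record, for each $!$-assignment, the depth $k_\ell$ of the $\la^{k_\ell!}$ that created it together with $\no{j}{(t_\ell)^{\mathcal{A}}}=0$ for $j\le k_\ell$, and then verify that an occurrence of such a variable inside the argument of the current $\la^{k!}$ forces $k_\ell\ge k$; this monotonicity of depths along the environment is the one step I expect to require real care, everything else being a direct application of Lemmas \ref{vectstratsubstlemma}, \ref{freevariablelemma} and \ref{applicationlemma}.
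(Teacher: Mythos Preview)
Your approach is essentially the paper's: induction on the structure of \TM, using Lemmas \ref{vectstratsubstlemma} and \ref{applicationlemma}, together with the fact that spine components and context terms are subterms of $M'$. The cases (root), (h), (if) and ($\be$) are handled correctly.

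The place where you diverge is the ``obstacle'' in the ($\be!$) case for coordinates $j\le k$: you need $(r'-1)\,\no{j}{Q_1}\le 0$, hence $\no{j}{Q_1}=0$ whenever duplication occurs. You obtain only $\no{j}{N_1}=0$ from Lemma \ref{applicationlemma}, and then propose to recover $\no{j}{Q_1}=0$ by strengthening the induction hypothesis with a record of the depth $k_\ell$ of each $!$-binder and a monotonicity argument along the environment. That detour is unnecessary. The closure $S=((\la^{k!}x.N\ !\ N_1)\ N_2\dots N_p)^{\mathcal{A}}=(\la^{k!}x.P\ !\ Q_1)\,Q_2\dots Q_p$ is itself a typed stratified term: it is a $\bar\de\be$-reduct of $M'$, and the paper explicitly states that the Subject Reduction Theorem holds for stratified terms. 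Therefore Lemma \ref{applicationlemma} applies to $S$, and its second clause, instantiated at the head subterm $(\la^{k!}x.P\ !\ Q_1)$, gives directly $\no{j}{Q_1}=0$ for every $j\le k$. This is exactly the missing inequality, and with it both parts follow from Lemma \ref{vectstratsubstlemma} just as you computed. In short, apply Lemma \ref{applicationlemma} to the closed term $(t)^{\mathcal{A}}$ rather than to the raw machine component $N_1$; the paper's brief proof sketch (``using Lemmas \ref{applicationlemma} and \ref{vectstratsubstlemma}'') is meant to be read that way, and no strengthening of the invariant is needed.
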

\begin{proof}
\begin{enumerate}
\item By induction on the structure of the tree using Lemmas \ref{applicationlemma} and \ref{vectstratsubstlemma}.
\item By induction on the structure of the tree using (1) and Lemma \ref{vectstratsubstlemma}\\
(given that all the elements of the canonical composition of $t$ and all the terms of \AAA\ are subterms of $M$).
\end{enumerate}
\end{proof}

\begin{lem}\label{reduxmeasure}
  Let $n=\llbracket (j),\ ($\AAA$|t),\ (a,\ b) \rrbracket$ and $n'=\llbracket (j'),\ ($\AAA$'|t'),\ (a',\ b') \rrbracket$ be two nodes of \TM.
  If $n'$ is a son of $n$ linked by a rule $(\be !)$, $(\be)$ or $(if)$, then $0\le measure_{(d)}(a',\ b') < measure_{(d)}(a,\ b)$.
\end{lem}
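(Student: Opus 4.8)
The plan is to treat the three tree rules $(\be!)$, $(\be)$ and $(if)$ separately and, in each case, to reduce the claim to Lemma~\ref{tkmeasure}. Reading off Figure~\ref{Treerules}, a son produced by $(\be!)$ carries the pair $(t_k(a),\ v_k(b))$, a son produced by $(\be)$ carries $(u_k(a),\ b)$, and each of the three sons produced by $(if)$ carries $(u_{d+1}(a),\ b)$. Lemma~\ref{tkmeasure} already asserts that these transformations make $measure_{(d)}$ strictly decrease while staying non-negative, but it does so only under the hypothesis that the image vectors lie in $\mathbb{N}^{d+2}$. Hence the whole content of the present lemma is to verify that hypothesis rule by rule; once it holds, the conclusion $0\le measure_{(d)}(a',\ b') < measure_{(d)}(a,\ b)$ is immediate.

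The only coordinate that can become negative is the one to which a $-1$ is applied: the $k$-th coordinate in $t_k$ and $u_k$, and the $(d+1)$-th coordinate in $u_{d+1}$. All other coordinates are obtained from components of $a$ and $b$ by adding non-negative quantities, since $r,m\ge 0$ and $a,b\in\mathbb{N}^{d+2}$; in particular $v_k(b)\in\mathbb{N}^{d+2}$ trivially. So the whole argument comes down to proving $a_k\ge 1$ in the two $\beta$-cases and $a_{d+1}\ge 1$ in the $(if)$-case. The key tool here is Lemma~\ref{positmeasure}(2), which gives $vect_d((t)^{\mathcal{A}})\le a$. In the $(\be!)$ and $(\be)$ cases the subject $t$ has head $\la^{k!} x.N$ (resp. $\la^{k} x.N$), so by Definition~\ref{numbocclambdaterm} this head abstraction contributes at least $1$ to the count of $\la$ symbols at depth $k$. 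This survives the substitution $()^{\mathcal{A}}$, which only replaces free variables and leaves the depth annotations of existing abstractions untouched, so $\no{k}{(t)^{\mathcal{A}}}\ge 1$ and therefore $a_k\ge \no{k}{(t)^{\mathcal{A}}}\ge 1$. In the $(if)$ case the subject has an $if$ in head position, so $\no{if}{(t)^{\mathcal{A}}}\ge 1$, and reading the last component of $vect_d((t)^{\mathcal{A}})\le a$ gives $a_{d+1}\ge 1$.

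Once these positivity facts are in hand, $t_k(a)$, $u_k(a)$, $u_{d+1}(a)$ and $v_k(b)$ all lie in $\mathbb{N}^{d+2}$, so Lemma~\ref{tkmeasure} applies verbatim and yields the strict decrease of $measure_{(d)}$ together with its non-negativity in each of the three cases. I expect the main (and essentially only) obstacle to be the positivity check just described, namely correctly relating the purely syntactic fact that the head of the subject is an abstraction at depth $k$, or a conditional, to a lower bound on the corresponding coordinate of $a$; this is precisely the bridge furnished by Lemma~\ref{positmeasure}, which is why that lemma is established first. The remaining work is routine bookkeeping with the definitions of $t_k$, $u_k$, $v_k$ and $measure_{(d)}$.
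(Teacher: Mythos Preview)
Your proof is correct and follows essentially the same approach as the paper: use Lemma~\ref{positmeasure} to check that the vectors at the child lie in $\mathbb{N}^{d+2}$, so that Lemma~\ref{tkmeasure} applies and yields the strict decrease. The paper is slightly slicker in that it applies Lemma~\ref{positmeasure} directly to the child node $n'$ (obtaining $a',b'\in\mathbb{N}^{d+2}$ at once, since $a'$ is precisely $t_k(a)$, $u_k(a)$ or $u_{d+1}(a)$) rather than arguing via the syntactic head of the parent term that $a_k\ge 1$, but the substance is the same.
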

\begin{proof}
By Lemma \ref{positmeasure} we have $a,b,a',b'\in \mathbb{N}^{d+2}$, thus by Lemma \ref{tkmeasure} we have $0\le measure_{(d)}(a',\ b') < measure_{(d)}(a,\ b)$.
\end{proof}

\begin{definition1}\label{branchtree}
Let $n$ be a node of \TM.
\begin{itemize}
\item $\#_{\be !}(n)$ denotes the number of applications of the $(\be !)$ rule in the path between the root of \TM\ and $n$.
\item $\#_{\be}(n)$ denotes the number of applications of the $(\be)$ rule in the path between the root of \TM\ and $n$.
\item $\#_{h}(n)$ denotes the number of applications of the $(h)$ rule in the path between the root of \TM\ and $n$.
\item $\#_{if}(n)$ denotes the number of applications of the $(if)$ rule in the path between the root of \TM\ and $n$.
\end{itemize}
\end{definition1}

\begin{lem}\label{numbifbetbranch}
Let $n$ be a node of \TM.
\begin{enumerate}
\item $\#_{\be !}(n)+\#_{\be}(n)+\#_{if}(n)\le measure_{(d)}((m,\ ...,\ m),\ (0,\ ...,\ 0))$
\item $\#_{h}(n)\le (measure_{(d)}((m,\ ...,\ m),\ (0,\ ...,\ 0)))^2$
\end{enumerate}
\end{lem}
\begin{proof}
\begin{enumerate}
\item By Lemma \ref{reduxmeasure}.
\item Let $n'$ and $n''$ two nodes of \TM\ such that there is a path of applications of the $(h)$ rule from $n'$ to $n''$.\\
      $\#_{h}(n'')-\#_{h}(n')\le \#($\AAA$_{n'})=\#_{\be !}(n')+\#_{\be}(n')$.\\
      Thus $\#_{h}(n)\le (\#_{\be !}(n)+\#_{\be}(n))*(\#_{\be !}(n)+\#_{\be}(n)+\#_{if}(n))$.
\end{enumerate}
\end{proof}

\begin{lem}\label{boundmeasure}
   $\forall i\in \llbracket 1;\ d+1 \rrbracket,$\\
   $measure_{(d)}((m,...,m),\ (0,...,0))\le measure_{(d-i)}((m^{3^{i+1}},...,m^{3^{i+1}}),\ (m^{3^{i}+1},...,m^{3^{i}+1}))$.
\end{lem}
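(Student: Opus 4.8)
The plan is to proceed by induction on $i$, after first reducing the claimed inequality between measures to a pair of componentwise bounds on the vectors produced by iterating the defining recursion of $measure$. Directly from Definition \ref{measure}, one step of the recursion sends a pair $(a,b)\in\mathbb{N}^{k+2}\times\mathbb{N}^{k+2}$ to $(a',b')\in\mathbb{N}^{k+1}\times\mathbb{N}^{k+1}$ with $a'_j=a_{j+1}+(r+1)(b_{j+1}+a_0 m)a_0$ and $b'_j=b_{j+1}+a_0 m$, and $measure_{(k)}(a,b)=measure_{(k-1)}(a',b')$. Iterating this $i$ times from the root pair $((m,\dots,m),(0,\dots,0))$ produces pairs $(A^{(i)},B^{(i)})$ with the exact identity $measure_{(d)}((m,\dots,m),(0,\dots,0))=measure_{(d-i)}(A^{(i)},B^{(i)})$. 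By the monotonicity part of Lemma \ref{lemeasure} it then suffices to establish, for all $j$, the bounds $A^{(i)}_j\le m^{3^{i+1}}$ and $B^{(i)}_j\le m^{3^{i}+1}$: these compare the exact value with $measure_{(d-i)}$ evaluated at the larger constant vectors appearing on the right-hand side of the statement, and the range $i\in\llbracket 1;d+1\rrbracket$ is exactly the number of recursion steps available before reaching $measure_{(-1)}$.

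Writing $p_i=\max_j A^{(i)}_j$ and $q_i=\max_j B^{(i)}_j$, the step above yields the scalar inequalities $q_{i+1}\le q_i+p_i m$ and $p_{i+1}\le p_i+(r+1)(q_i+p_i m)p_i=p_i\bigl(1+(r+1)q_{i+1}\bigr)$, with $p_0=m$, $q_0=0$, and I would use throughout that $r+1\le m$ (since $r<m$) and $m\ge 2$. The subtle point, which I expect to be the main obstacle, is that the naive hypothesis $p_i\le m^{3^{i+1}}$ is too weak to reproduce the bound on $q_{i+1}$: the single term $p_i m$ already reaches $m^{3^{i+1}+1}$, so the additive $q_i$ pushes the sum strictly above the required $b$-bound $m^{3^{i+1}+1}$ at level $i+1$. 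A plain propagation of the stated bounds therefore loses by a factor, and the induction does not close as written.

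I would overcome this by strengthening the induction hypothesis on the $a$-component to $p_i\le m^{3^{i+1}-1}$, which still implies the stated bound $p_i\le m^{3^{i+1}}$ but keeps one power of $m$ in reserve. With this reserve the $b$-recursion closes, $q_{i+1}\le q_i+p_i m\le m^{3^{i}+1}+m^{3^{i+1}}\le 2m^{3^{i+1}}\le m^{3^{i+1}+1}$, using $3^{i}+1\le 3^{i+1}$ and $m\ge 2$. For the strengthened $a$-bound, the established $q_{i+1}\le m^{3^{i+1}+1}$ and $r+1\le m$ give $1+(r+1)q_{i+1}\le m^{3^{i+1}+3}$, whence $p_{i+1}\le p_i\,m^{3^{i+1}+3}\le m^{2\cdot 3^{i+1}+2}\le m^{3^{i+2}-1}$, the last inequality holding because $2\cdot 3^{i+1}+2\le 3\cdot 3^{i+1}-1$ is equivalent to $3\le 3^{i+1}$. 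The base case $i=0$ is an immediate check ($p_0=m\le m^{2}$, $q_0=0$), and it delivers the level $i=1$ instance after one step. Everything remaining is the routine verification that these elementary exponent inequalities hold for $m\ge 2$.
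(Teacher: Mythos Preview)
Your proof is correct and follows exactly the paper's approach: induction on $i$ combined with the monotonicity Lemma~\ref{lemeasure}. The paper records only that one-line sketch; your observation that the naive hypothesis fails and must be strengthened to $p_i\le m^{3^{i+1}-1}$ is precisely the detail needed to make that sketch close.
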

\begin{proof}
By induction on $i$ using Lemma \ref{lemeasure}.
\end{proof}

\begin{thm}\label{KBbound}
   The machine \KB\ on the input $M$ is computing in a time bounded by $m^{3^{d+3}}$.
\end{thm}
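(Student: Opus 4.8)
The plan is to bound the \emph{height} of the computation tree of \KB\ on $M$, i.e.\ the length of a longest branch: for an alternating machine this is exactly the running time, and it is the quantity needed to invoke PSPACE $=$ APTIME afterwards. By Definition~\ref{TTM} the tree \TM\ faithfully describes the computation of \KB, each edge of \TM\ corresponding to at most two transitions of \KB\ (every $(if)$ step is immediately followed by one $(if')$ step, and the base cases add a single step at each leaf). Up to this constant factor it therefore suffices to bound, for an arbitrary leaf $n$ of \TM, the number $\#_{\be !}(n)+\#_{\be}(n)+\#_{h}(n)+\#_{if}(n)$ of rule applications on the path from the root to $n$.

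First I would write $P=measure_{(d)}((m,\ ...,\ m),\ (0,\ ...,\ 0))$ and apply Lemma~\ref{numbifbetbranch}: its first item bounds $\#_{\be !}(n)+\#_{\be}(n)+\#_{if}(n)$ by $P$, and its second item bounds $\#_{h}(n)$ by $P^2$, so every branch has length at most $P+P^2\le 2P^2$ (here $P\ge 1$). Next I would bound $P$ itself by instantiating Lemma~\ref{boundmeasure} at $i=d+1$. Since $d-(d+1)=-1$ and $measure_{(-1)}(a_0,b_0)=a_0$ by Definition~\ref{measure}, this yields
\[
P\le measure_{(-1)}\bigl((m^{3^{d+2}}),\ (m^{3^{d+1}+1})\bigr)=m^{3^{d+2}}.
\]
Combining the two estimates, the height is at most $2P^2\le 2\,m^{2\cdot 3^{d+2}}\le m^{3\cdot 3^{d+2}}=m^{3^{d+3}}$, where the last inequality uses $2\le m^{3^{d+2}}$ (valid since $m\ge 2$) and the harmless constant coming from the $(if')$ steps is absorbed in the same way.

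The one genuinely delicate point here is conceptual rather than computational: one must recognise that for the alternating machine \KB\ the resource to control is the \emph{height} of the computation tree, a single branch, and not the total number of nodes, which may be exponential in $m$. Once Lemma~\ref{numbifbetbranch} has isolated the $(h)$-steps---which cost the \emph{square} of the measure, because the substituted context can grow as large as the number of preceding $\beta$-steps---the theorem follows by combining it with Lemma~\ref{boundmeasure} and the elementary arithmetic that folds $2\cdot 3^{d+2}$ into the clean exponent $3^{d+3}$.
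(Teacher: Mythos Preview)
Your proof is correct and follows essentially the same route as the paper: you invoke Lemma~\ref{numbifbetbranch} to bound the branch length by $P+P^2$ with $P=measure_{(d)}((m,\dots,m),(0,\dots,0))$, then Lemma~\ref{boundmeasure} at $i=d+1$ to obtain $P\le m^{3^{d+2}}$, and finally fold the constants into the exponent $3^{d+3}$; the paper does exactly this, writing the intermediate bound as $2(m^{3^{d+2}}+m^{2\cdot 3^{d+2}})+1$ and likewise relating the depth of \TM\ to the running time of \KB\ via the factor~$2$ coming from the $(if)/(if')$ pairing.
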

\begin{proof}
  By Lemma \ref{boundmeasure}, $measure_{(d)}((m,\ ...,\ m),\ (0,\ ...,\ 0))\le m^{3^{d+2}}$.\\
  Furthermore by Lemma \ref{numbifbetbranch}, let $n$ be a node of \TM,\\
  let $k=measure_{(d)}((m,\ ...,\ m),\ (0,\ ...,\ 0))$ and let $p_n=\#_{\be !}(n)+\#_{\be}(n)+\#_{if}(n)+\#_{h}(n)$,\\
  $p_n\le k+k^2$.\\
  Thus $Time($\KB$(M))\le 2*Depth($\TM$)+1=2*(\displaystyle \max_{n} p_n)+1\le 2*(m^{3^{d+2}}+m^{2*3^{d+2}})+1$.
\end{proof}

	Note that the measure we have defined can be applied in the restricted case of \DLAL\ programs: in this case the machine is deterministic and our measure gives a new proof that \DLAL\ terms of boolean type can be evaluated in polynomial time where the degree of the polynomial depends on the depth of the term.


\subsection{Correctness of \KB}\label{correctKBsec}

Now, we need to prove that the alternating $\la$-calculus machine computes the right value.

\begin{lem}\label{boolean}
If $\llbracket(\exists)\ $\AAA$\ |\ \{b;\ t\}\rrbracket$ (resp. $\llbracket(\forall)\ $\AAA$\ |\ \{b;\ t\}\ \{b';\ t'\}\rrbracket$) is a configuration of the computation of \KB\ on $M$ then $; \vdash (t)^{\mathcal{A}}:\pa^n \BOO$ (resp. $; \vdash (t)^{\mathcal{A}}:\pa^n \BOO\ and\ ; \vdash (t')^{\mathcal{A}}:\pa^{n'} \BOO$).
\end{lem}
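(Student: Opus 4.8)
The plan is to argue by induction on the length of the path in the computation tree of \KB\ on $M$ leading to the configuration under consideration, proving the slightly more uniform invariant that every term component $t$ of a reachable configuration satisfies $; \vdash (t)^{\mathcal{A}} : \pa^{k} \BOO$ for some $k \in \NN$; the indices $n, n'$ in the statement are then just particular such $k$'s. This invariant entails that each closure $(t)^{\mathcal{A}}$ is a \emph{closed} term, by the Free Variable Lemma \ref{freevariablelemma}, and Subject Reduction (Theorem \ref{subjredthm}) preserves the empty typing contexts, so closedness is carried along for free. Alongside this I keep an easy bookkeeping invariant on contexts: in every $\mathcal{A} = [x_1 := t_1, \ldots, x_q := t_q]$ the free variables of each $t_i$ lie among $x_1, \ldots, x_{i-1}$. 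This holds at the root ($\mathcal{A} = \emptyset$) and is preserved, since the only rule extending a context is $(\be)$, which appends $(x' := N_1)$ with $N_1$ a subterm of the current subject, hence with free variables already in $dom(\mathcal{A})$. For the base case, the initial configuration is $\llbracket(\exists)\ \emptyset\ |\ \{yes;\ M\}\rrbracket$ and $(M)^{\emptyset} = M$ is a program, so $; \vdash M : \pa^{n} \BOO$ by hypothesis.

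For the inductive step I case on the rule producing the configuration. For $(\be)$, a direct substitution-commutation computation shows that $(t)^{\mathcal{A}}$ is a head $\be$-redex that reduces to exactly $(t')^{\mathcal{A}@(x':=N_1)}$: writing out the closures, the head becomes $(\la x.\,(N)^{\mathcal{A}})\,(N_1)^{\mathcal{A}}$, and since $x'$ is fresh the composite renaming $[x'/x]$ followed by $[N_1/x']$ collapses to $[N_1/x]$, so both sides equal $(N)^{\mathcal{A}}[(N_1)^{\mathcal{A}}/x]\,(N_2)^{\mathcal{A}}\cdots(N_p)^{\mathcal{A}}$. The conclusion then follows from Subject Reduction (Theorem \ref{subjredthm}) applied to this $\be$-reduction. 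For $(h)$ the context is unchanged and, by the bookkeeping invariant, $(x)^{\mathcal{A}} = (N)^{\mathcal{A}}$, so $(t)^{\mathcal{A}}$ and $(t')^{\mathcal{A}}$ are the very same closed term and trivially share a type. For $(if')$ we only discard one component of a universal configuration and re-tag the other as existential, so its invariant is inherited verbatim from the induction hypothesis.

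The logically interesting case is $(if)$, handled by the $\bar{\de}$-clause of Theorem \ref{subjredthm}. Write $M_i' = (M_i)^{\mathcal{A}}$ and $N_j' = (N_j)^{\mathcal{A}}$; the induction hypothesis gives $; \vdash (if\ M_0'\ then\ M_1'\ else\ M_2')\,N_1'\cdots N_p' : \pa^{n} \BOO$, and this closure is closed, so $M_0'$ is closed and the three $\bar{\de}$-reductions of Definition \ref{deltabarreduc} apply in the context obtained by applying the hole to $N_1', \ldots, N_p'$. Reducing to $M_1'\,N_1'\cdots N_p' = (M_1\ N_1\cdots N_p)^{\mathcal{A}}$ and to the guard $M_0' = (M_0)^{\mathcal{A}}$ each yields, by that clause, a term of type either $A = \pa^{n}\BOO$ or $\pa^{k}\BOO$ for some $k$; since $A$ is itself a boolean type, both components of the resulting universal configuration are typeable as $\pa^{k}\BOO$ for suitable $k$, as required. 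The $(if)$-transition selecting $M_2$ is symmetric.

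The main obstacle I anticipate is not logical but the plumbing around the environment-style closure $(\cdot)^{\mathcal{A}}$: one must show that the $(\be)$ step corresponds to a genuine term-level $\be$-reduction and that the $(h)$ step is a syntactic identity on closures, both of which rest on the substitution-commutation lemma and on the invariant that each binding refers only to earlier-bound variables. Once this is in place, the genuinely typing-theoretic content concentrates entirely in the $(if)$ case and is dispatched uniformly by the $\bar{\de}$-clause of Subject Reduction, so that no separate inversion lemma for the ($B$ e) rule is needed.
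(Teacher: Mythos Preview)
Your proof is correct and follows the same approach as the paper: induction on the computation tree (your ``path length'' formulation amounts to the same thing) together with Subject Reduction (Theorem~\ref{subjredthm}). The paper's proof is literally the one-liner ``By induction on the structure of the tree using Theorem~\ref{subjredthm}'', so your version simply spells out the case analysis---the substitution-commutation plumbing for $(\be)$ and $(h)$, the bookkeeping invariant on contexts, and the appeal to the $\bar{\de}$-clause of Theorem~\ref{subjredthm} for the $(if)$ transitions---that the paper leaves implicit.
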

\begin{proof}
By induction on the structure of the tree using Theorem \ref{subjredthm}.
\end{proof}

\begin{lem}\label{correct}
If $c=\llbracket(\exists)\ $\AAA$\ |\ \{b;\ t\}\rrbracket$ (resp. $\llbracket(\forall)\ $\AAA$\ |\ \{b;\ t\}\ \{b';\ t'\}\rrbracket$) is a configuration of the computation of \KB\ on $M$ then \KB\ is accepting $c$ if and only if $b\leftarrow (t)^{\mathcal{A}}$ (resp. $b\leftarrow (t)^{\mathcal{A}}$ and $b'\leftarrow (t')^{\mathcal{A}}$).
\end{lem}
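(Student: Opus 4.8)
The plan is to prove, by induction on the height of the (finite, by Theorem \ref{KBbound}) computation subtree of \KB\ rooted at $c$, the two statements simultaneously: (A) if $c=\llbracket(\exists)\ \mathcal{A}\ |\ \{b;\ t\}\rrbracket$ then $c$ is accepted iff $b\leftarrow (t)^{\mathcal{A}}$; and (B) if $c=\llbracket(\forall)\ \mathcal{A}\ |\ \{b;\ t\}\ \{b';\ t'\}\rrbracket$ then $c$ is accepted iff $b\leftarrow (t)^{\mathcal{A}}$ and $b'\leftarrow (t')^{\mathcal{A}}$. First I would note that the statement is well-posed: by Lemma \ref{boolean} each of $(t)^{\mathcal{A}},(t')^{\mathcal{A}}$ is typable of type $\pa^n \BOO$, so by Lemma \ref{normalformlemma} its normal form is exactly one of $T$, $F$; hence for a fixed subject exactly one of $yes\leftarrow (t)^{\mathcal{A}}$, $no\leftarrow (t)^{\mathcal{A}}$ holds. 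I also recall the acceptance semantics inherited from alternating machines: an accepting (resp. rejecting) configuration is (resp. is not) accepted, an existential configuration is accepted iff at least one of its successors is, and a universal one iff all of them are.

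The base cases are the four terminal rules. Since $T$ and $F$ are closed, $(T)^{\mathcal{A}}=T$ and $(F)^{\mathcal{A}}=F$ are their own normal forms, so e.g. $\llbracket(\exists)\ \mathcal{A}\ |\ \{yes;\ T\}\rrbracket$ is accepting and indeed $yes\leftarrow T$, while $\llbracket(\exists)\ \mathcal{A}\ |\ \{no;\ T\}\rrbracket$ is rejecting and indeed $no\leftarrow T$ fails; the remaining two are symmetric.

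For the deterministic transitions $(\beta)$ and $(h)$ the configuration is existential with a single successor $c'$ (the two $1/2$-labelled moves coincide) carrying the same guessed bit $b$, so $c$ is accepted iff $c'$ is, and it suffices to check that the subject-under-context keeps the same normal form. I would show that a $(\beta)$ step makes $(t)^{\mathcal{A}}$ head-$\beta$-reduce to the successor's $(t')^{\mathcal{A}'}$, unfolding the context substitutions and using the freshness of $x'$ together with the substitution lemma to commute $[N_1/x']$ past $\mathcal{A}$, and that an $(h)$ step leaves $(t)^{\mathcal{A}}$ literally unchanged, which relies on the invariant (maintained throughout, since each entry $x':=N_1$ created by $(\beta)$ has $FV(N_1)$ already bound to its left) that the free variables of a stored term are defined to its left in the context. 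In both cases confluence (Theorem \ref{confl}) and uniqueness of the normal form (Theorem \ref{normform}) give $Norm((t)^{\mathcal{A}})=Norm((t')^{\mathcal{A}'})$, so $b\leftarrow (t)^{\mathcal{A}}$ iff $b\leftarrow (t')^{\mathcal{A}'}$, and the induction hypothesis closes the case. The $(if')$ rule is equally direct: the universal configuration $\llbracket(\forall)\ \mathcal{A}\ |\ \{a;\ M_0\}\ \{b;\ N\}\rrbracket$ has exactly the two existential successors $\llbracket(\exists)\ \mathcal{A}\ |\ \{a;\ M_0\}\rrbracket$ and $\llbracket(\exists)\ \mathcal{A}\ |\ \{b;\ N\}\rrbracket$, so by induction it is accepted iff $a\leftarrow (M_0)^{\mathcal{A}}$ and $b\leftarrow (N)^{\mathcal{A}}$, which is exactly (B).

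The main obstacle is the $(if)$ rule, where the alternating structure must faithfully realise $if\ t_0\ then\ t_1\ else\ t_2 = (t_0\land t_1)\lor(\lnot t_0\land t_2)$. Here the existential $c$ has the two universal successors $c_1=\llbracket(\forall)\ \mathcal{A}\ |\ \{yes;\ M_0\}\ \{b;\ M_1\ N_1\ ...\ N_p\}\rrbracket$ and $c_2=\llbracket(\forall)\ \mathcal{A}\ |\ \{no;\ M_0\}\ \{b;\ M_2\ N_1\ ...\ N_p\}\rrbracket$, so $c$ is accepted iff $c_1$ or $c_2$ is. By induction $c_1$ is accepted iff $yes\leftarrow (M_0)^{\mathcal{A}}$ and $b\leftarrow (M_1\ N_1\ ...\ N_p)^{\mathcal{A}}$, and $c_2$ iff $no\leftarrow (M_0)^{\mathcal{A}}$ and $b\leftarrow (M_2\ N_1\ ...\ N_p)^{\mathcal{A}}$. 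The crucial point is that $(M_0)^{\mathcal{A}}$ has a determinate value: if $Norm((M_0)^{\mathcal{A}})=T$ then the $M_0$-test of $c_2$ fails while that of $c_1$ holds, and moreover $((if\ M_0\ then\ M_1\ else\ M_2)\ N_1\ ...\ N_p)^{\mathcal{A}}$ reduces, via $(M_0)^{\mathcal{A}}\to^* T$ followed by one $\de$-step, to $(M_1\ N_1\ ...\ N_p)^{\mathcal{A}}$ and so shares its normal form; hence $c$ accepted iff $c_1$ accepted iff $b\leftarrow ((if\ M_0\ then\ M_1\ else\ M_2)\ N_1\ ...\ N_p)^{\mathcal{A}}$. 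The case $Norm((M_0)^{\mathcal{A}})=F$ is symmetric, selecting $c_2$. This exhausts the transition rules and completes the induction; the delicate work is the substitution bookkeeping for $(h)$ and, above all, verifying in the $(if)$ case that exactly one guessed value of $M_0$ is consistent, so that the existential choice over the two universal branches computes precisely the conditional.
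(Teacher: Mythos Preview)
Your proposal is correct and follows exactly the approach the paper sketches: induction on the computation tree of \KB\ starting from the leaves, using Lemma~\ref{boolean} to ensure each subject-under-context has boolean type and Lemma~\ref{normalformlemma} to guarantee its normal form is $T$ or $F$. The paper's own proof is a one-line sketch (``By induction on the tree, starting from the leafs and using Lemmas~\ref{boolean} and~\ref{normalformlemma}''); you have simply filled in the case analysis, including the substitution invariant for $(h)$ and the key dichotomy on $Norm((M_0)^{\mathcal{A}})$ in the $(if)$ case, faithfully and in the intended way.
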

\begin{proof}
By induction on the tree, starting from the leafs and using Lemmas \ref{boolean} and \ref{normalformlemma}.
\end{proof}

\begin{thm}\label{KBcorrect}
   The machine \KB\ is accepting $M$ if and only if $Norm(M)=T$.
\end{thm}
\begin{proof}
  By Lemma \ref{correct}.
\end{proof}

If $t$ is a closed term of type $\WWW \fli \pa^n \BOO$, we define $\mathcal{L}(t)$ as the set of words accepted by $t$. Finally, we obtain the desired result:

\begin{thm}[APTIME soundness of \DLALB]\label{correctness}
   Let $t$ be a term of $\LaB$ such that $; \vdash t:\WWW \fli \pa^n \BOO$ or $; \vdash t:\WWW \fm \pa^n \BOO$ has a derivation of depth $d$.\\
   Let $\mathcal{M}$ be the Alternating Turing Machine which, on the input $i$ represented by the $\la$-term $w$, simulates the machine \KB\ on the input $(t\ w)$.\\
   Then $\mathcal{M}$ decides the language represented by $t$ and $\mathcal{M}$ is computing in time $O(m^{3^{d+4}})$.\\
   Thus $\mathcal{L}(t)\in APTIME$.
\end{thm}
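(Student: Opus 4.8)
The plan is to obtain the statement by instantiating the two main results of this subsection, Theorem \ref{KBcorrect} (correctness) and Theorem \ref{KBbound} (the time bound), at the particular program $M=(t\ w)$, and then to package the alternating machine \KB\ as a genuine alternating Turing machine with only polynomial overhead. First I would check that $(t\ w)$ really is a program in the sense of Definition \ref{program}: since $t$ is closed with $; \vdash t:\WWW \fli \pa^n \BOO$ (or $; \vdash t:\WWW \fm \pa^n \BOO$) and the input $i$ is coded by a closed term $w$ with $; \vdash w:\WWW$, a single application of $(\fli e)$ in its closed-premise variant (resp.\ of $(\fm e)$) yields $; \vdash (t\ w):\pa^n \BOO$. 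Hence \KB\ may be run on $(t\ w)$, and $Norm(t\ w)$ is defined by Theorem \ref{normform}.

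For correctness, recall that $\mathcal{L}(t)$ is by definition the set of words $w$ with $Norm(t\ w)=T$, equivalently $yes \leftarrow (t\ w)$. On input $i$ the machine $\mathcal{M}$ simply runs \KB\ on $(t\ w)$ and accepts exactly when \KB\ accepts. By Theorem \ref{KBcorrect}, \KB\ accepts $(t\ w)$ if and only if $Norm(t\ w)=T$, so $\mathcal{M}$ accepts $i$ if and only if $w\in\mathcal{L}(t)$; thus $\mathcal{M}$ decides the language represented by $t$.

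For the time bound, set $m=|(t\ w)|$. Since $t$ is fixed, $|t|$ is a constant and $|w|=O(|i|)$, so $m=O(|i|)$. The derivation of $(t\ w)$ has depth $d'\le d+1$: the branch through $t$ has depth $d$, the branch through $w$ has the constant depth of the word encoding, and the only possible increase of one unit comes from the r.h.s.\ premise of the $(\fli e)$ step (in the $\fm$ case there is no increase, by Definition \ref{depderivation}). By Theorem \ref{KBbound} the number of \KB\ transitions along any branch of its computation is then at most $m^{3^{d'+3}}\le m^{3^{d+4}}$. It remains to see that $\mathcal{M}$ can simulate each \KB\ transition in time polynomial in $m$. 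By the subterm property underlying Lemma \ref{positmeasure}, the term carried by any configuration and every term stored in its context is a subterm of $(t\ w)$, and the context length is bounded by the number of preceding $\beta$-steps; hence every configuration has size polynomial in $m$ and each transition of Figure \ref{KBrules} is computable in polynomial time. The existential and universal configurations of \KB\ are moreover mirrored directly by the two branching modes of $\mathcal{M}$. Multiplying the per-branch step count by this polynomial per-step cost, and absorbing the latter into the gap between the exponents $3^{d'+3}$ and $3^{d+4}$, gives an alternating running time $O(m^{3^{d+4}})=O(|i|^{3^{d+4}})$.

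Since $d$ is a constant determined by $t$ alone, this running time is polynomial in $|i|$, so $\mathcal{M}$ is an alternating Turing machine deciding $\mathcal{L}(t)$ in polynomial time, whence $\mathcal{L}(t)\in APTIME$. The mathematical heart of the argument — the bound, exponential in the derivation depth but polynomial in the term size, on the number of \KB\ steps — is already supplied by Theorem \ref{KBbound}; I expect the main obstacle to be the honest bookkeeping that turns \KB\ into an ATM, namely verifying that configurations stay of polynomial size along every branch (so that alternation depth, rather than configuration size, governs the runtime) and pinning down the depth estimate $d'\le d+1$ so that the exponent lands at $3^{d+4}$ once the polynomial simulation overhead is absorbed.
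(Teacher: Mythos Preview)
Your proposal is correct and follows exactly the paper's approach, whose entire proof is the one line ``By Theorems \ref{KBcorrect} and \ref{KBbound}''; you have simply spelled out the steps the paper leaves implicit (that $(t\ w)$ is a program, the depth estimate for the combined derivation, and the polynomial per-step overhead in simulating \KB\ on an ATM). Your depth bound $d'\le d+1$ is slightly loose---the branch through $w$ contributes the constant depth of the word encoding plus one for the r.h.s.\ premise, independently of $d$---but the paper does not justify the exponent $3^{d+4}$ either, and since the APTIME conclusion only requires $d'$ to be a constant this is harmless.
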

\begin{proof}
  By Theorems \ref{KBcorrect} and \ref{KBbound}.
\end{proof}


\section{APTIME Completeness}

This section presents the second part of the proof that \DLALB\ characterizes the predicates of PSPACE and is simply using classical ideas of the literature (see \cite{BaillotTerui04} and \cite{Gabo}).

 We have the following data types for unary integers
and binary words in \DLALB:
\begin{eqnarray*}
\NNN & = & \forall \al. (\al \fm \al) \fli \pa (\al \fm \al),\\
\WWW & = & \forall \al. (\al \fm \al) \fli
(\al \fm \al) \fli \pa (\al \fm \al).
\end{eqnarray*}

The inhabitants of types $\NNN$ and $\WWW$ are the familiar Church codings
of integers and words:
\begin{eqnarray*}
\ci{n} & = & \la f. \la x. \underbrace{f  (f \dots (f}_{n} x) \dots ),\\
\ci{w} & = & \la f_0. \la f_1. \la x. f_{i_1} (f_{i_2} \dots (f_{i_n} x) \dots ),
\end{eqnarray*}
with $i\in\{0,1\}$, $n\in N$ and $w= i_1 i_2 \cdots i_n \in \{0,1\}^*$.

It can be useful in practice to use a type $A \otimes B$. It can be defined anyway, thanks to full weakening:

$$
  A \otimes B  = \forall \alpha. ((A \fm B \fm \alpha) \fm \alpha).
$$ 

 We use as syntactic sugar the following new constructions on terms with the typing rules of
 Figure \ref{derivedrules2}:
$$
\begin{array}{rcl}
 t_1 \te t_2 & = & \la x. x t_1 t_2, \\
\Let{u}{x_1 \te x_2}{t} & = & u (\la x_1. \la x_2. t).
\end{array}
$$
\begin{figure*}

  \begin{center}
\fbox{
\begin{tabular}{c}
   \infer[\mbox{($\otimes$\; i)}]{\Gamma_1, \Gamma_2 ; \Delta_1, \Delta_2 \vdash t_1 \te t_2:A_1 \te A_2 }
  {\Gamma_1; \Delta_1 \vdash t_1:A_1 & \Gamma_2; \Delta_2 \vdash t_2:A_2}
 \; \\[1em]
  \infer[\mbox{($\otimes$\; e)}]{\Gamma_1, \Gamma_2;  \Delta_1, \Delta_2 \vdash \Let{u}{x_1 \te x_2}{t}: B }
 { \Gamma_1; \Delta_1 \vdash u:A_1 \otimes A_2  & \Gamma_2; x_1:A_1, x_2:A_2, \Delta_2 \vdash t:B} 
\end{tabular}
}
\end{center}
  \caption{Derived rules}\label{derivedrules2}
\end{figure*}

\begin{thm}[APTIME completeness of \DLALB]\label{t-fpcompletelalc}
  If a function $f: \{0,1\}^{\star} \rightarrow \{0,1\}$ is
  computable in time $n^{2^d}$ by a one-tape alternating Turing machine for some $d$, 
then there exists a term $M$ of $\LaB$
  such that $;\vdash M: \WWW \fli \pa^{2d+2} \BOO$ and
  $M$ represents $f$.
\end{thm}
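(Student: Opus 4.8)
The plan is to encode the run of the one-tape alternating Turing machine $\mathcal{A}$ deciding $f$ as a well-typed term, following the standard Church-style simulation of \cite{BaillotTerui04} and \cite{Gabo}, and to let the alternation of $\mathcal{A}$ be mirrored by the conditional constructor rather than by an explicit (exponential) computation tree. First I would fix a type $Config$ of configurations: the tape content is stored as a pair of words of type $\WWW$ (the contents to the left and to the right of the head, the left part reversed) combined with the finite control data — the current state together with its existential/universal label — encoded as a fixed tuple of $\BOO$'s through the derived product $\te$. Reading the scanned symbol, writing, moving the head and updating the state are all operations on this finite-plus-word data, so each is definable as a closed \DLALB\ term, using $if\ \cdot\ then\ \cdot\ else\ \cdot$ to branch over the finitely many cases.

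Next I would build the one-step term. From a configuration I compute its transition: a deterministic step merely rewrites the tape and state, while a branching step produces two candidate successors $c_1,c_2$ together with the label of the current state. The essential point is that the acceptance value is assembled with the conditional. Writing $acc$ for the ``acceptance at the next level'' predicate, the acceptance of $c$ is taken to be $if\ (label\ c)\ then\ (acc\ c_1 \land acc\ c_2)\ else\ (acc\ c_1 \lor acc\ c_2)$, the universal case contributing the conjunction and the existential case the disjunction of the two successors (with $\land,\lor$ on $\BOO$ themselves coded by $if$). This yields a term $Step : (Config \fm \pa^{k}\BOO) \fm (Config \fm \pa^{k}\BOO)$ turning the depth-$j$ acceptance map into the depth-$(j{+}1)$ one. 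I would then iterate $Step$ a number of times equal to the running time $n^{2^{d}}$, starting from the constantly-false map, and finally apply the result to $init\ w$, where $init : \WWW \fli Config$ reads the input word into the initial configuration.

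The bound $n^{2^{d}}$ is produced as a \DLALB\ term on Church integers: from the length $|w|$ extracted from $w : \WWW$ I iterate squaring $d$ times, each iterated squaring introducing paragraph modalities in the usual \DLAL\ fashion, and the final iteration of $Step$ by the resulting integer of type $\NNN$ contributing the remaining ones. Threading the affine discipline of \DLALB\ (each linear variable used at most once, intuitionistic hypotheses introduced only through $(\pa i)$ and $(\fli i)$), the accumulated modalities are arranged so that the total is exactly $2d+2$, whence $M$ receives the type $\WWW \fli \pa^{2d+2}\BOO$. Making the $d$ squarings and the concluding iteration line up to precisely $2d+2$, neither more nor fewer, is the most delicate part of the construction.

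For correctness I would prove, by induction on $j$ and on the computation tree of $\mathcal{A}$, that the depth-$j$ map computes exactly the acceptance status of every configuration whose subtree has height at most $j$; taking $j=n^{2^{d}}$, an upper bound on the height of the whole tree, gives $Norm(M\ w)=T$ iff $\mathcal{A}$ accepts $w$ iff $f(w)=1$, invoking Theorem \ref{normform} to speak of the normal form. The normal form of $M\ w$ may of course be exponentially large, but that is irrelevant here: efficient evaluation is guaranteed separately by the APTIME soundness of \KB\ (Theorem \ref{correctness}), which converts the nested conditionals into genuine alternation. Beyond the modality counting above, the main obstacle is to organise the simulation so that the label uniformly selects between $\land$ and $\lor$ while both successors $c_1,c_2$ remain available to the recursive predicate, even though each linear resource of $Config$ may be consumed only once.
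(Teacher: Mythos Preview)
Your plan matches the paper's: encode configurations, build a one-step functional that exploits the additive rule $(B\ e)$ so that the linear configuration $c$ and recursion argument $h$ may each occur once per branch of a nested conditional, iterate it $n^{2^d}$ times, and count modalities. Two organisational choices in the paper are worth flagging. First, the configuration type is the Asperti--Roversi shared-iterator form $\mathbf{Conf}=\forall\alpha.\,(\alpha\fm\alpha)\fli(\alpha\fm\alpha)\fli\pa((\alpha\fm\alpha)^2\lltensor\BOO^k\lltensor\BOO^2)$ rather than a literal product of two $\WWW$'s; this is what lets $\mathsf{trans}_i:\mathbf{Conf}\fm\mathbf{Conf}$ be typed with no extra $\pa$, since moving a symbol across the head does not require decomposing two independent Church words. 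Second, $\mathsf{Step}$ has type $(\mathbf{Conf}\fm\BOO^2)\fm(\mathbf{Conf}\fm\BOO^2)$ --- returning a \emph{pair} of booleans --- and the base of the iteration is $\mathsf{Kind}$ (the characteristic-state projection) rather than the constantly-false map: the outer test $if\ \pi_1(\mathsf{Kind}\ c)\ \dots\ else\ \mathsf{Kind}\ c$ then handles accepting and rejecting configurations for free, whereas your $\BOO$-valued $Step$ started from the false predicate must also branch on whether $c$ is already terminal (a case your displayed formula for $Step$ omits). These are refinements rather than a different argument; the modality arithmetic ($2d$ from the polynomial, one from the Church numeral's own $\pa$, one inside $\mathbf{Conf}$) comes out the same.
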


\noindent
{\em Proof (Sketch).} Let $\mathcal{M}$ be an ATM with $2$ symbols, $1$ tape, $k$ classical states and four characteristic states.
 The four characteristic states: Accepting, Rejecting, Universal and Existential, are represented respectively by
$A = F\ \lltensor\ T$, $R = F\ \lltensor\ F$, $\wedge\ = T\ \lltensor\ F$ and $\vee\ = T\ \lltensor\ T$ of type $\BOO^2$.
 
Following the idea of \cite{AspertiRoversi02},
let $\mathbf{Conf}$ be the \DLALB-type
$$
\forall \alpha. (\alpha\llto\alpha) \fli (\alpha\llto\alpha) \fli
\pa ((\alpha\llto\alpha)^{2}\lltensor (\BOO^k \lltensor \BOO^2)),
$$
which serves as a type for the configurations of the ATM.

	We will proceed in the same way as Gaboardi et al.:
\begin{itemize}
\item show that all polynomials can be represented in the system;
\item define a function Step which answers recursively if a configuration will be accepted or not by the ATM, it will be given the type:\\
$(\mathbf{Conf}\fm \BOO^2) \fm (\mathbf{Conf}\fm \BOO^2)$;
\item define a term which decides if a given configuration is accepted, by iterating Step a polynomial number of times.
\end{itemize}

We have the following \DLALB-terms:
\begin{itemize}
\item $\mathsf{trans}_1\ (resp.\ \mathsf{trans}_2) : \mathbf{Conf}\fm \mathbf{Conf}$ for one-step of the first (resp. the second) function of transition of the ATM (similar to $\mathsf{trans}$ in \cite{Bai});
\item $\mathsf{Kind} : \mathbf{Conf}\fm \BOO^2$ for the projection from a configuration to its characteristic state;
\item $\mathsf{P} : \NNN\fm \pa^{2d}\NNN$ for the polynomial $n \mapsto\ n^{2^d}$ (same as $\mathsf{P}$ in \cite{Bai}).
\end{itemize}

	The term $\mathsf{Step}$ (of type $(\mathbf{Conf}\fm \BOO^2) \fm (\mathbf{Conf}\fm \BOO^2)$) is defined in a way analogous to $\mathsf{Step}$ in \cite{Gabo}:
\begin{itemize}
\item $\mathsf{Term_3} = if\ \pi_2\ (h\ (\mathsf{trans}_1\ c))\ then\ F\lltensor (\pi_2\ (h\ (\mathsf{trans}_2\ c)))\ else\ R$;
\item $\mathsf{Term_2} = if\ \pi_2\ (h\ (\mathsf{trans}_1\ c))\ then\ A\ else\ F\lltensor (\pi_2\ (h\ (\mathsf{trans}_2\ c)))$;
\item $\mathsf{Term_1} = if\ \pi_2\ (\mathsf{Kind}\ c)\ then\ \mathsf{Term_2}\ else\ \mathsf{Term_3}$;
\item $\mathsf{Step} = \la h. \la c. if\ \pi_1\ (\mathsf{Kind}\ c)\ then\ \mathsf{Term_1}\ else\ \mathsf{Kind}\ c$.
\end{itemize}

$\mathsf{Step}$ term operation:
\begin{itemize}
\item If $\mathsf{Step}$ receives as argument a configuration $c$ and a function of characterization $h$ of type $\mathbf{Conf}\fm \BOO^2$ such that $h\ (\mathsf{trans}_1\ c)$ (resp. $h\ (\mathsf{trans}_2\ c$) returns $A$ if $\mathsf{trans}_1\ c$ is accepted by the ATM and $R$ if it is rejected (resp. $A$ if $\mathsf{trans}_2\ c$ is accepted by the ATM and $R$ if it is rejected) then $\mathsf{Step}\ h\ c$ returns $A$ if $c$ is accepted by the ATM and $R$ if it is rejected;
\item $\mathsf{Term_1}$ represents the case where the characteristic state of $c$ is neither Accepting nor Rejecting (in which cases it is sufficient to return $\mathsf{Kind}\ c$);
\item $\mathsf{Term_2}$ (resp. $\mathsf{Term_3}$) represents the case where the characteristic state of $c$ is Existential (resp. Universal).
\end{itemize}

We also have the following \DLALB-terms:
\begin{itemize}
\item $\mathsf{init} : \WWW\fm \mathbf{Conf}$ for initialization (similar to $\mathsf{init}$ in \cite{Bai});
\item $\mathsf{length} : \WWW\fm \NNN$ for the length map (similar to $\mathsf{length}$ in \cite{Bai});
\item $\mathsf{coer} : \WWW\fm \pa^{2d} \WWW$ for an identity function (usefull for the typing and similar to $\mathsf{coer}$ in \cite{Bai}).
\end{itemize}

Finally we have $\mathsf{M} : \WWW\fm \pa^{2d+2}\BOO$ which is the term representing the ATM $\mathcal{M}$:
$\mathsf{M} = \la w. (\pi_2\ (\mathsf{P}\ (\mathsf{length}\ w)\ \mathsf{Step}\ \mathsf{Kind}\ (\mathsf{init}\ (\mathsf{coer}\ w))))$.

$\mathsf{M}$ term operation:
\begin{itemize}
\item $\mathsf{init}\ (\mathsf{coer}\ w))$ is a term which represents the initial configuration of the ATM.
\item $\mathsf{Step}$ calls itself recursively $n^{2^d}$ times (with $n$, represented by $\mathsf{length}\ w$, the length of the word $w$) -thanks to the term $\mathsf{P}\ (\mathsf{length}\ w)$- so that it calls $\mathsf{Kind}$ only on configurations which have a characteristic state Accepting or Rejecting. Thus the term $\mathsf{P}\ (\mathsf{length}\ w)\ \mathsf{Step}\ \mathsf{Kind}\ (\mathsf{init}\ (\mathsf{coer}\ w))$ returns $A$ if $w$ is accepted by the ATM and $R$ if it is rejected.
\item Therefore $\pi_2\ (\mathsf{P}\ (\mathsf{length}\ w)\ \mathsf{Step}\ \mathsf{Kind}\ (\mathsf{init}\ (\mathsf{coer}\ w)))$ returns $T$ (true) if $w$ is accepted by the ATM and $F$ (false) if it is rejected. Thus $\mathsf{M}$ represents $\mathcal{M}$.
\end{itemize}

 \begin{figure*}

  \begin{center}
\fbox{
\begin{tabular}{c}

Let $\De\ =\  h : \mathbf{Conf}\fm \BOO^2, c : \mathbf{Conf}$\\
\\

\tiny{

\infer[\mbox{($\fm$ e)}]{; \De \vdash \pi_2\ (h\ (\mathsf{trans}_1\ c)) : \BOO}
                           {\infer[\mbox{($\forall$ e)}]{; \vdash \pi_2 : \BOO^2 \fm \BOO}{\infer[\mbox{($\forall$ e)}]{; \vdash \pi_2 : \forall \beta. (\BOO\lltensor\beta) \fm \beta}{; \vdash \pi_2 : \forall \alpha. \forall \beta. (\alpha\lltensor\beta) \fm \beta}} &
                            \infer[\mbox{($\fm$ e)}]{; \De \vdash (h\ (\mathsf{trans}_1\ c)) : \BOO^2}
                                                        {\infer[\mbox{(Id)}]{; h : \mathbf{Conf}\fm \BOO^2 \vdash h : \mathbf{Conf}\fm \BOO^2}{} & \infer[\mbox{($\fm$ e)}]{; c : \mathbf{Conf} \vdash \mathsf{trans}_1\ c : \mathbf{Conf}}{; \vdash \mathsf{trans}_1 : \mathbf{Conf}\fm \mathbf{Conf} & \infer[\mbox{(Id)}]{; c : \mathbf{Conf} \vdash c : \mathbf{Conf}}{}}}
                           }
             }\\
\\

\small{

\infer[\mbox{($\lltensor$ i)}]{; \De \vdash F\lltensor (\pi_2\ (h\ (\mathsf{trans}_2\ c))): \BOO^2}
                           {\infer[\mbox{($B_0$ i)}]{; \vdash F : \BOO}{} &
                            ; \De \vdash \pi_2\ (h\ (\mathsf{trans}_2\ c)) : \BOO
                           }
             }\\
\\

\small{

\infer[\mbox{($B$ e)}]{; \De \vdash if\ \pi_2\ (h\ (\mathsf{trans}_1\ c))\ then\ A\ else\ F\lltensor (\pi_2\ (h\ (\mathsf{trans}_2\ c))) : \BOO^2}
  {; \De \vdash \pi_2\ (h\ (\mathsf{trans}_1\ c)) : \BOO
   & ; \De \vdash A : \BOO^2
   & ; \De \vdash F\lltensor (\pi_2\ (h\ (\mathsf{trans}_2\ c))): \BOO^2
  }
             }

\end{tabular}
}
\end{center}
  \caption{Type derivation for the term $\mathsf{Term_2}$}\label{Term2type}
\end{figure*}


\section{Conclusion and perspectives}

We have presented a polymorphic type system for lambda calculus with booleans which guarantees that all well-typed terms are representing APTIME predicates and that all predicates of APTIME are represented by well-typed terms. Thus this system is characterizing PSPACE (given that PSPACE = APTIME).\\
Otherwise, if we were to consider terms of type $\WWW \fli \pa^n \WWW$ instead of terms of type $\WWW \fli \pa^n \BOO$ we believe that we would obtain a characterization of FPSPACE without changing the type assignment system and with the same data type in input and output (which is a property not shared by \STAB).\\
Now, it would be interesting to see if system \DLALB\ could be modified in order to characterize the polynomial hierarchy (PH). We think that such study would be facilitated by the use of APTIME Abstract Machine in the Soundness part of the proof of this paper. Thus this proof could be reused to prove the PH soundness of the modified system of \DLALB.

\bibliographystyle{eptcs} \bibliography{dlalbool}

\end{document}